\newcommand{\ProjT}{\mathbf{P}_{\mathbf{u}^\perp}^T} 
\newcommand{\Map}{\mathcal{P}_\mathbf{u}}
\newcommand{\Blur}{\mathcal{P}_{\mathbf{u},\tau}}
\newtheorem{theorem}{Theorem}
\newtheorem{corollary}{Corollary}
\begin{document}

\onecolumn{\large{
This work has been submitted to the IEEE for possible publication. Copyright may be transferred without notice, after which this version may no longer be accessible.
\copyright 2019 IEEE. Personal use of this material is permitted. Permission from IEEE must be obtained for all other uses, including reprinting/republishing this material for advertising or promotional purposes, collecting new collected works for resale or redistribution to servers or lists, or reuse of any copyrighted component of this work in other works.}}

\twocolumn

%
\title{A Convolutional Forward and Back-Projection Model for Fan-Beam Geometry}
%
%
%

\author{Kai Zhang,~\IEEEmembership{Student Member,~IEEE,}
		Alireza Entezari,~\IEEEmembership{Senior Member, ~IEEE,}

\thanks{This project was funded in part by NSF under grant IIS-1617101.}
\thanks{Kai Zhang is with the Department of Computer and Information Science and Engineering (CISE), University of Florida, Gainesville, FL, 32611 USA (e-mail: zhangkai6@ufl.edu).}

\thanks{Alireza Entezari is with the Department of Computer and Information Science and Engineering (CISE), University of Florida, Gainesville, FL, 32611 USA (e-mail: entezari@ufl.edu).}

}

\maketitle

\begin{abstract}
Iterative methods for tomographic image reconstruction have great potential for enabling high quality imaging from low-dose projection data. The computational burden of iterative reconstruction algorithms, however, has been an impediment in their adoption in practical CT reconstruction problems. We present an approach for highly efficient and accurate computation of forward model for image reconstruction in fan-beam geometry in X-ray CT. The efficiency of computations makes this approach suitable for large-scale optimization algorithms with on-the-fly, memory-less, computations of the forward and back-projection. Our experiments demonstrate the improvements in accuracy as well as efficiency of our model, specifically for first-order box splines (i.e., pixel-basis) compared to recently developed methods for this purpose, namely Look-up Table-based Ray Integration (LTRI) and Separable Footprints (SF) in 2-D.
\end{abstract}

\begin{IEEEkeywords}
computed tomography, iterative reconstruction algorithms
\end{IEEEkeywords}

%
\IEEEpeerreviewmaketitle

\section{Introduction}
A wide range of imaging modalities use tomographic reconstruction
algorithms to form 2-D and 3-D images from their projection data. While
the classical Filtered Back Projection (FBP) algorithm, and its
variants, are widely used in practice \cite{pan2009commercial}, iterative
reconstruction algorithms hold great potential to enable high-quality
imaging from limited projection data (e.g., few views, limited-view, low-dose) and reduce exposure to radiation. Developments, over last several
decades, in this area often formulate the image reconstruction as an
(ill-posed) inverse problem where a regularized solution is found by an
iterative optimization algorithm. Several aspects of iterative
reconstruction algorithms \cite{gilbert1972iterative, andersen1984simultaneous, andersen1989algebraic, sidky2008image} overlap with active areas of research in
solving these optimization problems efficiently as well as image modeling
with regularization (e.g., sparsity-based, network-based) that enhance
the quality of recovered image from limited data.

An important issue in the performance of iterative reconstruction
algorithms is the discretization, and more generally the representation,
of images. The expansion of an image in a basis allows for derivation of
a {\em forward model}, $\mathbf{A}$, that relates the image coefficients $\mathbf c$ to the measurements $\mathbf y$ in the projection domain by a linear system: $\mathbf y = \mathbf{A} \mathbf c$. The entries of the forward model are computed from the contributions of
each basis function to each measurement in the projection domain.
Specifically, the contribution of a basis function is computed by (1)
integrating the basis function along an  incident ray to form its 
{\em footprint} and (2) integrating these footprints across a detector
cell in the projection domain (often called the {\em detector blur}).
Common choices for expansion are the pixel- and voxel-basis that provide
a piecewise-constant model for image representation. Kaiser-Bessel \cite{lewitt1990multidimensional}
functions have also been considered as a smooth basis for image
representation. 

Given a finite set of measurements $\mathbf y \in \mathbb{R}^m$ and a choice for basis function, one can setup linear systems for images at various resolutions (i.e., discretizations) to be reconstructed from $\mathbf y$. For an image resolution with $N$ elements, characterized by coefficients $\mathbf c \in \mathbb{R}^N$, the forward model is constructed from the footprint of scaled basis functions. To reduce the degrees of freedom in the inverse problem, one wants to build the forward model at the coarsest possible resolution for image discretization; however, doing so limits our ability to resolve for the features of interest in the image. The ability of (a space spanned by translates of) a basis function to approximate images with the smallest resolution is often characterized by approximation order -- scales of a basis function with higher approximation order provide discretizations that approximate the underlying image more accurately, compared to the scales of a basis function with a lower approximation order. While the choice of pixel-basis provides a partition of unity and hence a first-order method for approximation, Kaiser-Bessel  functions require filtering operations\cite{nilchian2015optimized} to achieve first-order approximation. A different class of basis functions, called box splines, have been proposed in \cite{entezari2012box} for image discretization in the context of tomographic reconstruction. The first-order box splines coincide with the pixel-basis (in 2-D) and voxel-basis in (3-D), but higher order box splines allow for higher orders of approximation. 

Besides approximation order, the effectiveness of basis functions for tomographic reconstruction also depends on the accuracy and efficiency of calculating integrals involved in the footprint and detector blur. The pixel-basis and Kaiser-Bessel functions as well as box splines have closed-form Radon transforms and their footprints can be computed analytically. However, computation of detector blur is more challenging, and several approaches have been proposed for approximating the underlying integrals, such as strip-integral method\cite{lo1988strip}, natural pixel decomposition \cite{byonocore1981natural}, Fourier-based methods \cite{tabei1992backprojection}, distance-driven \cite{de2002distance} approximation, and separable footprints (SF)~\cite{long20103d}. More recently a look-up table-based integration (LTRI) approach was proposed~\cite{ha2018look} that provides speedups compared to the SF method at the cost of further errors in approximating the integrals. 

In this paper, we demonstrate that the detector blur can be calculated with a convolution -- in the continuous domain -- using box spline methodology; we also demonstrate a practical approach for computing these projections in fan-beam geometry using box spline evaluation algorithms. The convolutional approach leads to efficient computations that are exact in parallel geometry and highly accurate in fan-beam geometry. While our method encompasses both flat and arc detectors, we will discuss flat detectors, since the extension to the arc geometry is easily obtained as a special case.
Specific contributions of this paper include:
\begin{itemize}
\item Derivation of fast forward and back-projection operators for exact computation of footprint integral in the fan-beam geometry.
\item Derivation of the accurate detector blur computation in both parallel and fan-beam geometry.
\end{itemize}

\section{X-ray optics Model}\label{Fan Beam Projection Geometry}
\subsection{Fan Beam Geometry}
To specify the fan-beam geometry in the general $2$-dimensional configuration, 
let $\mathbf{u} \in S$ denote viewing direction and point $\mathbf{o} \in 
\mathbb{R}^2$ as rotation center (or origin). A point  $\mathbf{p} = D_{po}\mathbf{u}$ is the location of projector, where $D_{po} \in \mathbb{R}^+$ is the (unsigned) distance from projector to rotation center.
A hyperplane $\mathbf{u}^\perp$ orthogonal to the direction $\mathbf{u}$ denotes the detector plane and point $\mathbf{d} \in \mathbf{u}^\perp$ is the center of the detector plane. Let $\ProjT$ be a matrix whose columns span the hyperplane $\mathbf{u}^\perp$
,  $\mathbf{x} \in \mathbb{R}^2$ be the spatial coordinates of the input function (image domain), and the parameterized form is
$\mathbf{x} =\mathbf{p} + t\mathbf{v}(s)$,
where ${s} \in \mathbb{R}$ is the coordinate on the 1-D detector plane (sinogram domain) and $\mathbf v(s)$ is the unit direction from $\mathbf p$ to $\ProjT s$, which can be calculated by 
$\frac{\mathbf{p} - \ProjT s}{ \|\mathbf{p} - \ProjT s \| }$. The detector-rotation center distance is
$D_{so}$ and projector-detector distance is $D_{ps}$. The geometry is 
illustrated in Fig. \ref{fig:fanBeamGeometry}.
\begin{figure}[ht]
\centering
{\includegraphics[width=0.8\columnwidth]{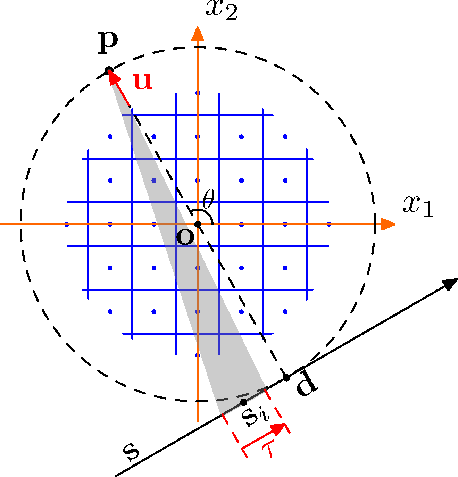}}
\caption{Fan-beam X-ray CT system, a discretized model}\label{fig:fanBeamGeometry}
\end{figure}

\subsection{Analytical Model}
In geometric tomography $f$ is often the indicator function of a convex polytope and in biomedical \cite{brahme2014comprehensive}, scientific imaging \cite{rebollo2013sparse}, and industrial applications \cite{natterer1986mathematics} $f$ is the relative radiodensity modeling material's attenuation index as the X-ray passes through the object (often described by linearization of the Lambert-Beer law \cite{long20103d}). When considering that the projector is ideal point source, the 2-dimensional fan-beam X-ray transform $\mathcal{P}$ maps $f(\mathbf x)$, 
$\mathbf x \in \mathbb{R}^2$, into the set of its \emph{line integrals} to form the projection:

\begin{equation}\label{eq:xRayIntegration}
\mathcal{P}\{f\}(s,\mathbf{u}) = \int_{0}^{\infty} f(\mathbf{p} +  t\mathbf{v}(s)) {\rm d}t.
\end{equation} 
We denote above map as $\Map\{f\}$ for short. 

In a simple model of forward projection, one can do point-sampling on the projected function $\Map \{f\}$, 
 whereas in more realistic modeling of the transform model the projections are integrated across a detector cell with a finite bin width ${\tau} \in \mathbb{R}$ such that equation
(\ref{eq:xRayIntegration}) can be modified to 
\begin{equation}\label{eq:radonBoxBlur}
\begin{split}
\mathcal{P}_{\mathbf{u}, {\tau}}\{f\}({s}) = & \int_{{s} - \frac{\boldsymbol{\tau}}{2} }^{ {s} + \frac{\boldsymbol{\tau}}{2} } h_\tau(s)\int_0^\infty f(\mathbf{p} +  t\mathbf{v}(s)){\rm d}t {\rm d}s
\end{split}
\end{equation}
where $h_\tau$ is the detector blur function over the support of bin width ${\tau}$. Since the detectors are usually uniformly placed on the projection
domain, the blur function is often modeled as shift-invariant function \cite{herman2009fundamentals, lu2009computable}. Because the detector sensitivity is often modeled as a constant function over the detector cell or with a drop-off at the cell boundary \cite{lu2009computable, garcia2009study}, the analytical model can be simplified\cite{yu2012finite}:
\begin{equation}\label{eq:anlyticalProjection}
\begin{split}
\mathcal{P}_{\mathbf{u}, {\tau}}\{f\}({s}) &= \int_{s-\frac{\tau}{2}}^{s+\frac{\tau}{2}}\int_{0}^{\infty} f(\mathbf{p} +  t\mathbf{v}(s)){\rm d}t {\rm d}s\\
&= \int_{\Omega_s}\frac{f(\mathbf{x})}{\gamma(s,\tau)\|\mathbf{x-p}\|} {\rm d}\mathbf{x},
\end{split}
\end{equation}
where $\Omega_s$ is the detector source-bin triangle shown as the gray area in
Fig. \ref{fig:fanBeamGeometry} and $\gamma(s, \tau)$ is the fan-beam angle at coordinate $s$ with bin width $\tau$.

Due to the linearity of integration, X-ray transform pseudo-commutes with the 
translation:
\begin{equation}\label{eq:translationProperty}
\begin{split}
\Map\{f(\cdot - \mathbf{x}_0)\}( s) &= (\Map f)(s - \mathbf{P(x}_0))\\
\Blur\{f(\cdot - \mathbf{x}_0)\}( s) &= (\Blur f)(s - \mathbf{P(x}_0)),
\end{split}
\end{equation}
where the operator $\mathbf{P}$ represents the \emph{perspective 
projection} with perspective division. As the geometric projection of a point from image domain onto sinogram domain is not orthogonal in fan-beam geometry (e.g., $\mathbf{ps}_i$ is not perpendicular to detector plane in Fig. \ref{fig:fanBeamGeometry}.), the perspective distortion achieved 
by perspective division is necessary to correctly reflect the relation between distance of a pixel to projector and its location on the projection plane. 


\subsection{Discretized Model}
Discretization or a finite-dimensional approximation of a continuous-domain signal (image) $f$ is an important issue in signal processing. With an $N$-dimensional model for approximation (see Fig. \ref{fig:fanBeamGeometry}), an expansion in a basis set of the form
\begin{equation}\label{eq:modelofInputFunction}
f_N(\mathbf x) = \sum_{n=1}^Nc_n \varphi(\mathbf{x-k}_n),
\end{equation}
allows us to derive a discretized forward model in tomography. Here $\varphi$ is a basis function, and $c_n$ is the expansion coefficient corresponding to the $n^{\rm th}$ grid point $\mathbf k_n$. The combination of (\ref{eq:xRayIntegration}), (\ref{eq:radonBoxBlur}), and this expansion, together with the translation property (\ref{eq:translationProperty}), provides the discretized forward models:
\begin{equation}\label{eq:modelofXrayIntegration}
\begin{split}
\Map\{f_N\}( s)  &= \sum_{n=1}^N c_n\Map\{\varphi\}( s - \mathbf{P(k}_n)) \\
\Blur\{f_N\}( s)  &= \sum_{n=1}^N c_n\Blur\{\varphi\}( s - \mathbf{P(k}_n)).
\end{split}
\end{equation}
The line integral of the basis function, $\Map\{\varphi\}$, is called \emph{footprint} of the basis and $\Blur\{\varphi\}$ is \emph{detector blur}.
{Equations (\ref{eq:modelofXrayIntegration}) show that the exact X-ray transform of the discretized model can be modeled by linear combination of the integral of the basis function.

In 2-D, the commonest choice of basis functions for image representation is the indicator function of pixels (aka the pixel-basis). With pixel-basis and a constant function to model the detector blur, a geometric approach for the combined footprint and detector blur computations can be derived~\cite{yu2012finite} as:
\begin{equation}\label{eq:anlyticalProjectionDiscrete}
\Blur\{f_N\}( s) \approx \sum_{{n} =1}^{N} c_n \frac{A_n}{\gamma(s, \tau)\|\mathbf k_n- \mathbf p\|},
\end{equation}
where $A_{n}$ is the intersection area of the detector source-bin triangle with the pixel located at $n^{\rm th}$ grid $\mathbf{k}_n$.
This intersection area can be computed exactly by the Gauss's area formula:
\begin{equation}\label{eq:groundTruth}
A_{n} = \frac{1}{2!}  \sum_{i=1}^{M-2} \Bigl| ( \mathbf{v}_i - \mathbf{v}_{i+1}) \times (\mathbf{v}_M - \mathbf{v}_{i+1}) \Bigr|,
\end{equation}
where $\mathbf{v}_i$ are the anticlockwise sorted vertices of a convex polygon obtained by intersection of detector source-bin triangle with a pixel, which can be found algorithmically (e.g., by applying Sutherland-Hodgman algorithm \cite{sutherland1974reentrant}), and $M(\geq 3)$ denotes the total number of intersections. 

{Searching for the intersection vertices is computationally expensive and hard to be parallelized due to the sequential nature of polygon clipping and sorting algorithm. The LTRI method~\cite{ha2018look} provides a solution to speedup these computations by using a look-up table to store a finite set of precomputed intersection areas and interpolating them for each combination of viewing angle, pixel and detector bin. This interpolation naturally introduces an extra source of error that can be controlled, to some extent, by increasing the resolution of the look-up table, but the data transfer involved in accessing a large look up table slows down GPU computations. In addition, a constant approximation of $\|\mathbf{x - p}\|$ by $\|\mathbf{k-p}\|$ degrades the accuracy in approximating the analytical model (\ref{eq:anlyticalProjection}).  Furthermore, this approximation is not flexible for higher-order cases because the intersection area needs to be generalized by with geometric integration when higher-order basis functions are used and this renders the volume computation infeasible.}

\subsection{Roadmap} Our method models the footprint, $\Map\{ \varphi\}$, in the continuous sinogram domain using box spline calculus (instead of approximating the optics integrals through intersection areas). This formulation allows us to leverage an exact closed-form formula in both parallel and fan-beam geometries. The detector blur, $\Blur \{\varphi\}$, calculated by integral of $\Map\{ \varphi\}$ within a detector bin, in our framework, also has an exact closed-form expression for parallel geometry. In fan-beam geometry, although this integral can be computed exactly, it has a prohibitively large computational cost. While we use that exact integration as the $\emph{reference projector}$, we also introduce the concept of \emph{effective blur} for efficient approximation of the detector blur. The highly efficient computation of detector blur, discussed in section \ref{sec:fanbeamXray}, has a closed-form solution in box spline methodology which leads to highly accurate and efficient computations that are the main results of this paper.

%


\section{Discretization in Box Spline Basis}
Box splines are piecewise polynomial functions that can be used as basis functions for approximation in discrete-continuous modeling. The pixel-basis coincides with the first order box spline in 2-D and higher order box splines can be considered as more general choices for discretization of images. However, since the pixel-basis is the most commonly-used choice in CT, we will view the pixel-basis using the terminology of box splines. While viewing pixel-basis as a box spline may appear as a complication, benefits of this formulation becomes apparent once we establish that footprint and detector blur integrals of the pixel-basis result in higher-order box splines. The resulting higher-order box splines allow us to efficiently and accurately model these operations that are essential to forward and back-projection.

\subsection{Box Spline Review}
Box splines generalize B-splines to multivariate setting where they include tensor-product B-splines as a special case, but are generally non-separable functions. A box spline is a smooth piecewise polynomial function, $M: \mathbb{R}^d \mapsto \mathbb{R}$, that is associated with $N$ vectors in $\mathbb{R}^d$~\cite{deboor1993box}. The simplest box spline (i.e., $N=d$) can be constructed in $\mathbb{R}^d$ as the indicator function of the $d$-dimensional hypercube that is the pixel- and voxel-basis function when d = 2 and 3. Box splines have a convolutional construction that is essential to our derivation of footprint and detector blur.

An {\em elementary} box spline, $M_{\boldsymbol{\xi}}$, associated with a vector
$\boldsymbol{\xi} \in \mathbb{R}^d$ can be thought of as the indicator function of the set $\{t\boldsymbol{\xi} | 0 \leq t < 1\}$, and is formally defined as a Dirac-line distribution (generalized function) by its directional ``moving-average'' action on a
function $f(\mathbf{x})$ in $\mathbb{R}^d$: 
$(M_{\boldsymbol{\xi}} * f)(\mathbf{x}) = \int_0^1 f(\mathbf{x} - t \boldsymbol{\xi}){\rm d}t$. Given a set of $N \ge d$ directions, arranged in columns, as
$\boldsymbol{\Xi} := [\boldsymbol{\xi}_1, \boldsymbol{\xi}_2, \cdots,
\boldsymbol{\xi}_N]$, the associated box spline can be constructed by:
\begin{equation}
M_{\boldsymbol{\Xi}}(\mathbf{x}) = (M_{\boldsymbol{\xi}_1} * 
\cdots * M_{\boldsymbol{\xi}_N})(\mathbf{x}),
\end{equation}
and this is illustrated in 2D (i.e., $d=2$) in Fig. \ref{fig:boxsplinesConvolution}.
\begin{figure}[ht]
\centering
{\includegraphics[width=1.0\columnwidth]{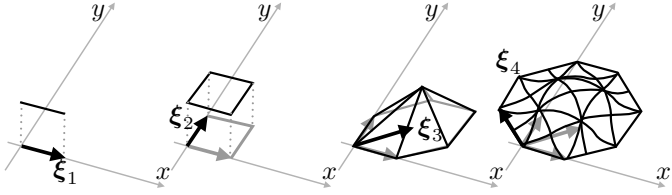}}
\caption{From left to right, the figures show that the Box Splines can be constructed by convolving an elementary Box Splines in specific direction.}\label{fig:boxsplinesConvolution}
\end{figure}
When the directions are orthogonal, $M_{\boldsymbol{\Xi}}$ is a tensor-product B-spline with the repeats of a direction elevating the degree of the B-spline.
\subsection{X-ray projection in Parallel Geometry}
Previous work~\cite{entezari2012box} has demonstrated that in \emph{parallel} geometry the projection (e.g., X-ray and Radon transforms) of a box spline $M_{\boldsymbol{\Xi}}$ is another box spline, $M_{\boldsymbol{Z}}$ (i.e., in the sinogram domain), where $\boldsymbol{Z}$ is the \emph{geometric} projection of the original directions in $\boldsymbol{\Xi}$. 
Let $\mathcal{R}_\mathbf{u}$ denote the X-ray projection for a viewing
direction specified by vector $\mathbf{u}$, then we have~\cite{entezari2012box}:
\begin{equation}\label{eq:projectionBoxSplineParallel}
\mathcal{R}_\mathbf{u}\{M_{\boldsymbol{\Xi}}\}({s}) = M_{\boldsymbol{Z}}({s})=
(M_{\mathbf{R}_{\mathbf{u}^\perp}\boldsymbol{\xi}_1} * \cdots *
 M_{\mathbf{R}_{\mathbf{u}^\perp} \boldsymbol{\xi}_N})({s}),
\end{equation}
where $\mathbf{R}_{\mathbf{u}^\perp}$ is the transformation matrix that geometrically projects the coordinates of image domain onto detector plane perpendicular to 
$\mathbf u$. Using this notation, we have $\boldsymbol{Z} := \mathbf{R}_{\mathbf{u}^\perp} \boldsymbol{\Xi}$.
Fig. \ref{fig:boxSplineProjConv} shows a box spline in $\mathbb{R}^2$ specified by directions $\boldsymbol{\Xi} = [\boldsymbol{\xi}_1, \boldsymbol{\xi}_2]$,
which coincides with a pixel-basis, when projected to the sinogram domain is
a univariate box spline in $\mathbb{R}$ with 2 directions that is the convolution of 2 elementary box splines. In this
example, the matrix $\mathbf{R}_{\mathbf{u}^\perp} = [\sin(\theta), 
-\cos{\theta}]$ specifies the direction of the projection plane, where 
$\theta$ is the viewing angle, and $\boldsymbol{\xi}_1 = 
\begin{bmatrix}
1\\
0
\end{bmatrix}
$, and $\boldsymbol{\xi}_2 = 
\begin{bmatrix}
0\\
1
\end{bmatrix}
$.
\begin{figure}[ht]
\centering
{\includegraphics[width=0.9\columnwidth]{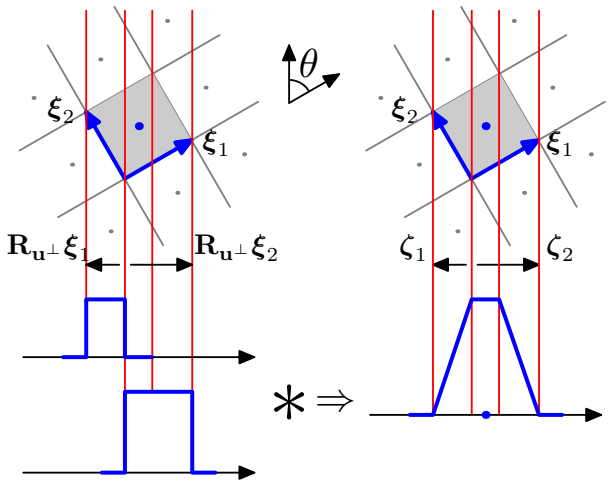}}
\caption{Projection of pixel basis as convolution. }\label{fig:boxSplineProjConv}
\end{figure}
\subsection{Detector Blur}
Since the detector sensitivity is often modeled
as a constant function, the blur function $h_{{\tau}}$ can be
modeled as a box spline in an elementary box spline so that
$h_{{\tau}} = M_{{\tau}}$. 
With this model, we can derive\footnote{A preliminary version of this manuscript appeared in conference proceedings in ISBI~\cite{Zhang2019box}.} the result of detector blur in parallel-beam geometry.
\begin{theorem}\label{theory:detectorBlurParallel}
The parallel-beam projection of pixel-basis with detector blur is a 3-direction box spline.
\end{theorem}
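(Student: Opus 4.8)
The plan is to combine three facts already assembled in the paper: (i) the pixel-basis in $\mathbb{R}^2$ is the box spline $M_{\boldsymbol{\Xi}}$ with $\boldsymbol{\Xi} = [\boldsymbol{\xi}_1, \boldsymbol{\xi}_2]$, the two coordinate directions; (ii) in parallel geometry the X-ray projection of a box spline is again a box spline whose directions are the geometrically projected columns, $\mathcal{R}_\mathbf{u}\{M_{\boldsymbol{\Xi}}\}(s) = M_{\mathbf{R}_{\mathbf{u}^\perp}\boldsymbol{\Xi}}(s)$ (eq.~\eqref{eq:projectionBoxSplineParallel}); and (iii) the detector blur is convolution, in the sinogram domain, against the elementary box spline $h_\tau = M_\tau$. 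So the object to identify is $M_\tau * \mathcal{R}_\mathbf{u}\{M_{\boldsymbol{\Xi}}\}$.

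First I would write the footprint explicitly: applying \eqref{eq:projectionBoxSplineParallel} with $\mathbf{R}_{\mathbf{u}^\perp} = [\sin\theta, -\cos\theta]$ gives $\mathcal{R}_\mathbf{u}\{M_{\boldsymbol{\Xi}}\}(s) = (M_{z_1} * M_{z_2})(s)$, a univariate box spline with the two scalar directions $z_1 = \sin\theta$ and $z_2 = -\cos\theta$ (the two projected edge-lengths of the pixel). This is a trapezoid in $s$. Next I would invoke the associativity and commutativity of the convolutional construction of box splines: convolving this footprint with the elementary blur box spline $M_\tau$ simply appends $\tau$ as a third scalar direction, so that
\begin{equation}\label{eq:threedir}
\Blur\{\varphi\}(s) = \big(M_{z_1} * M_{z_2} * M_\tau\big)(s) = M_{\boldsymbol{Z}'}(s),
\end{equation}
with $\boldsymbol{Z}' = [\,\sin\theta,\ -\cos\theta,\ \tau\,]$. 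By definition this $M_{\boldsymbol{Z}'}$ is a univariate $3$-direction box spline, which is exactly the claim. I would also note, as a sanity check, that the smoothness and degree are consistent: two directions give the $C^0$ piecewise-linear trapezoidal footprint, and a third raises it to a $C^1$ piecewise-quadratic, matching what one expects from integrating a trapezoid against a box.

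The one genuinely load-bearing step is the interchange in \eqref{eq:threedir}: that post-composing the parallel X-ray transform with the detector-bin integration is the same as convolving the footprint with $M_\tau$ in the sinogram variable. This rests on Fubini/linearity of the line integral together with the translation pseudo-commutation \eqref{eq:translationProperty}, and on the fact that the bin integration in \eqref{eq:radonBoxBlur} with a constant $h_\tau$ is literally convolution against the indicator of $[-\tau/2,\tau/2]$, i.e.\ against $M_\tau$ (up to the centering convention). Everything else — that a convolution of three elementary box splines is a $3$-direction box spline — is immediate from the definitions in the Box Spline Review. So the proof is essentially: expand $\varphi$ as $M_{[\boldsymbol{\xi}_1,\boldsymbol{\xi}_2]}$, project via \eqref{eq:projectionBoxSplineParallel}, append $\tau$ via the convolutional definition, and read off that the result is $M_{[\sin\theta,\,-\cos\theta,\,\tau]}$.
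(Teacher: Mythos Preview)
Your proposal is correct and follows essentially the same route as the paper: identify the pixel-basis with $M_{[\boldsymbol{\xi}_1,\boldsymbol{\xi}_2]}$, apply \eqref{eq:projectionBoxSplineParallel} to obtain the 2-direction footprint $M_{\mathbf{R}_{\mathbf{u}^\perp}\boldsymbol{\Xi}}$, then append $\tau$ via the convolutional construction to get $M_{[\mathbf{R}_{\mathbf{u}^\perp}\boldsymbol{\Xi},\,\tau]}$. The paper adds one further remark you omit---that since $\mathbf{R}_{\mathbf{u}^\perp}\boldsymbol{\tau}=\tau$, the blur can equivalently be absorbed in the image domain as $\mathcal{R}_{\mathbf{u}}\{M_{[\boldsymbol{\Xi},\boldsymbol{\tau}]}\}$---but this is an ancillary observation, not needed for the theorem itself.
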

\begin{proof}
Because detector blur is the convolution of blur function $h_\tau$ with footprint $\mathcal{R}_\mathbf{u}\{M_{\boldsymbol{\Xi}}\}$.
By applying the convolutional construction of box spline, the detector blur can be computed by a single evaluation of a box spline:
\begin{equation*}
\begin{split}
\mathcal{R}_{\mathbf{u}, {\tau}}\{M_{\boldsymbol{\Xi}}\} ({s})= 
\mathcal{R}_{\mathbf{u}}\{M_{\boldsymbol{\Xi}}\} * h_{{\tau} }({s}) = M_{[\mathbf{R}_{\mathbf{u}^\perp}\boldsymbol{\Xi}, {\tau}]}({s}).
\end{split}
\end{equation*}
We also note that as the detector cells lie on the detector plane, it's naturally perpendicular to 
the direction of projection such that $\mathbf{R}_{\mathbf{u}^\perp}\boldsymbol{\tau} = \tau$ and $[\mathbf{R}_{\mathbf{u}^\perp}\boldsymbol{\Xi}, {\tau}] = \mathbf{R}_{\mathbf{u}^\perp}[\boldsymbol{\Xi}, \boldsymbol{\tau}]$, where $\boldsymbol{\tau}$ is a vector in image domain parallel to detector plane. Hence, it indicates that the detector blur can be modeled in image domain as convolution of basis function with an elementary box spline (see Fig. \ref{fig:boxSplineConvEff}):
\begin{equation}\label{eq:boxspineProjectionBinBlurParallel}
\begin{split}
\mathcal{R}_{\mathbf{u}, {\tau}}\{M_{\boldsymbol{\Xi}}\}( s)  = \mathcal{R}_{\mathbf{u}}\{M_{[\boldsymbol{\Xi, \tau}]}\}( s).
\end{split}
\end{equation}
\end{proof}

\begin{figure}[ht]
\centering
{\includegraphics[width=0.65\columnwidth]{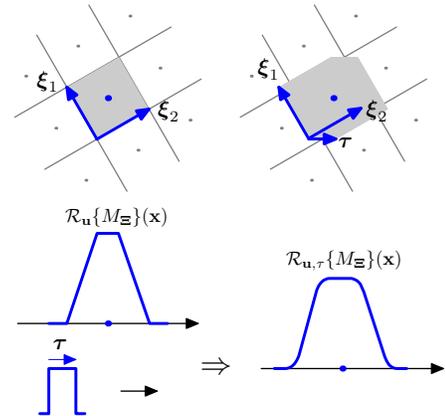}}
\caption{Projection of box spline with detector blur}\label{fig:boxSplineConvEff}
\end{figure}

\section{Forward Projection with Box Splines in Fan-Beam}\label{sec:fanbeamXray}
In fan-beam geometry, although the concept of X-ray and Radon transform of box splines still holds, the divergent nature of rays does not allow a direct application of (\ref{eq:projectionBoxSplineParallel}). The rays, in fan-beam geometry, are not perpendicular to the detector plane, however we can construct an orthogonal coordinate system for each ray by a rotation of detector plane and evaluating the projection at the corresponding coordinate system of individual rays. This transformation allows us to extend  the theory in parallel geometry to fan-beam setting. Unlike the parallel setting that projection results in a single box spline with a fixed direction set, $\mathbf{Z}$, for each viewing angle, we introduce a box spline for each viewing ray (i.e., projection point in the sinogram domain). 


\begin{theorem}\label{threorem:fan_parall}
The fan-beam projection of a box spline onto sinogram domain is a box spline whose direction set depends on and varies with the sinogram coordinate.
\end{theorem}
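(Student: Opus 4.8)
The plan is to reduce the fan-beam statement, one ray at a time, to the parallel-beam identity (\ref{eq:projectionBoxSplineParallel}). Fix a sinogram coordinate $s$; the corresponding measurement integrates $M_{\boldsymbol{\Xi}}$ along the single line $\{\mathbf{p}+t\mathbf{v}(s):t\ge 0\}$, and the X-ray integral along one line is a purely geometric quantity that does not depend on whether the neighbouring pencil of rays is parallel or divergent. First I would make this precise by introducing, for each $s$, the rotation $\mathbf{R}_{\mathbf{v}(s)^\perp}$ that aligns the detector with the line orthogonal to $\mathbf{v}(s)$, i.e., the coordinate frame in which this particular ray is configured exactly like a parallel-beam ray. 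In that frame (\ref{eq:projectionBoxSplineParallel}) applies verbatim to the line through $\mathbf{p}$ and yields $\Map\{M_{\boldsymbol{\Xi}}\}(s)=M_{\boldsymbol{Z}(s)}\bigl(b(s)\bigr)$, where $\boldsymbol{Z}(s):=\mathbf{R}_{\mathbf{v}(s)^\perp}\boldsymbol{\Xi}$ is the geometric projection of the generating directions onto the line perpendicular to $\mathbf{v}(s)$, and $b(s):=\mathbf{R}_{\mathbf{v}(s)^\perp}\mathbf{p}$ is the scalar offset (in $d=2$) at which the source projects. Since $\mathbf{v}(s)$, and hence $\mathbf{R}_{\mathbf{v}(s)^\perp}$, varies continuously (and nontrivially) with $s$, the direction set $\boldsymbol{Z}(s)$ genuinely varies with the sinogram coordinate, which is exactly the content of the theorem; the closed form and piecewise-polynomial degree of $M_{\boldsymbol{Z}(s)}$ then follow from the convolutional construction of box splines.

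Second, I would carry out the bookkeeping that expresses the measurement as a box-spline evaluation in the native detector coordinate $s$ rather than in the rotated frame, and then lift the single-basis-function statement to a general discretized image. Relating $s$ to the orthogonal coordinate of the $\mathbf{v}(s)^\perp$ frame introduces a Jacobian, but this is precisely the perspective-division factor already isolated in (\ref{eq:translationProperty}) and (\ref{eq:anlyticalProjection}), so no new machinery is needed: it simply relocates $b(s)$ and rescales the directions of $\boldsymbol{Z}(s)$. With this in hand, the translation pseudo-commutation (\ref{eq:translationProperty}) and the expansion (\ref{eq:modelofInputFunction})--(\ref{eq:modelofXrayIntegration}) turn each term $c_n\,\Map\{\varphi\}(s-\mathbf{P}(\mathbf{k}_n))$ into $c_n\,M_{\boldsymbol{Z}(s)}\bigl(b(s)-\mathbf{P}(\mathbf{k}_n)\bigr)$ with the same $s$-indexed direction set, so fan-beam forward projection is a sum of shifted evaluations of one box spline per viewing ray, matching the statement and setting up the effective-blur approximation promised in the roadmap.

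The main obstacle I anticipate is making the per-ray rotation argument fully rigorous while tracking two things at once: (i) the covariance of the one-dimensional X-ray integral under $\mathbf{R}_{\mathbf{v}(s)^\perp}$, which is what licenses reusing (\ref{eq:projectionBoxSplineParallel}) despite the divergence of the fan; and (ii) the nonlinear dependence of $\mathbf{v}(s)$ on $s$, which must be absorbed cleanly into the evaluation point and a scaling of the directions so that what we obtain is honestly the evaluation of a box spline indexed by $s$, and not merely a reparametrized (``warped'') box spline. Care is also needed in checking that the half-line range $t\ge 0$ in (\ref{eq:xRayIntegration}) captures the entire support of $M_{\boldsymbol{\Xi}}$ (the object lying on the far side of the source), so that the fan-beam footprint coincides with the full-line parallel projection used in (\ref{eq:projectionBoxSplineParallel}). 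Everything else — the explicit form of $M_{\boldsymbol{Z}(s)}$ and the reduction to a single box-spline evaluation per ray — is then routine from box-spline calculus.
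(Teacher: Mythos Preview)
Your proposal is correct and matches the paper's proof almost exactly: fix $s$, rotate into the frame orthogonal to $\mathbf{v}(s)$, and invoke the parallel-beam identity (\ref{eq:projectionBoxSplineParallel}) to obtain $\Map\{M_{\boldsymbol{\Xi}}\}(s)=M_{\boldsymbol{Z}(s)}(s')$ with $\boldsymbol{Z}(s)=\mathbf{R}_{\mathbf{v}(s)^\perp}\boldsymbol{\Xi}$ and $s'=\mathbf{R}_{\mathbf{v}(s)^\perp}\mathbf{p}$ (your $b(s)$). The only difference is that the paper stops there and leaves the evaluation in the rotated coordinate $s'$, whereas your second paragraph proposes to pull back to the native detector coordinate via a Jacobian; that extra step is not needed for the theorem as stated and the paper never carries it out.
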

\begin{proof}
The X-ray projection of a box spline in image domain is the line integral:
\begin{equation}\label{eq:boxintegral}
\begin{split}
\Map \{M_{\boldsymbol {\Xi} }\}({s}) &= \int_{l_s(t)}M_{\boldsymbol {\Xi}}({\bf x}){\rm d}l_s(t) \\
& = \int_{0}^{\infty}M_{\boldsymbol {\Xi}}(\mathbf p + 
t( \frac{ \mathbf{p - P_{u^{\perp}}}s }{||\mathbf{p - P_{u^{\perp}}}s ||_2} ) ) {\rm d}t
\end{split}
\end{equation}
where $l_s(t)$ is the parametric equation of line. This line can also be reparameterized as:
\begin{equation}\label{eq:reparam_line}
l_s(t) = t \mathbf v(s) + \mathbf R_{\mathbf{v(s)^\perp}}^T s',
\end{equation}
where $\mathbf v(s)$ is the direction of the ray with unit length, $\mathbf{R}_{\mathbf{v}(s)^{\perp}}^T$ is a matrix whose columns span the hyperplane, $\mathbf{v}(s)^{\perp}$, perpendicular to ${\mathbf{v}(s)}$ and ${s}' = \mathbf{R}_{\mathbf{v}(s)^{\perp}}\mathbf{p}$ is the orthogonal projection of $\mathbf p$ on $\mathbf{v}(s)^{\perp}$. In this new parameterized form, the
matrix $[\mathbf R_{\mathbf v(s)^{\perp}}^T, \mathbf v(s)]$ forms a orthonormal frame rotated from image domain where the rotation angle varies with the sinogram coordinate, $s$.
With the reparameterized line equation (\ref{eq:reparam_line}) , the integral (\ref{eq:boxintegral}) is reformulated as:
\begin{equation}\label{eq:newboxintegral}
\begin{split}
\Map \{M_{\boldsymbol {\Xi} }\}({s}) =
\int_{0}^{\infty}M_{\boldsymbol {\Xi}}(t \mathbf v(s) + \mathbf R_{\mathbf{v(s)^\perp}}^T s') {\rm d}t,
\end{split}
\end{equation}
which is the exact form of the X-ray projection of box spline in parallel geometry in \cite{entezari2010box} and \cite{entezari2012box}.
Therefore the right-hand side of (\ref{eq:newboxintegral}) can be derived as:
\begin{equation}\label{eq:newboxintegralparallel}
\begin{split}
\int_{0}^{\infty}M_{\boldsymbol {\Xi}}(t \mathbf v(s) + \mathbf R_{\mathbf{v(s)^\perp}}^T s') dt &= \mathcal{R}_{\mathbf v(s)}\{M_{\boldsymbol {\Xi}}\}(s') \\
&=M_{\mathbf{R}_{ \mathbf{v}(s)^{\perp} } \boldsymbol{\Xi} }
(s')\\
&= M_{\boldsymbol Z (s) }({s'}). 
\end{split}
\end{equation}
The equations (\ref{eq:newboxintegral}) and (\ref{eq:newboxintegralparallel}) indicate
\begin{equation*}\label{eq:fanbeamBoxSplineProj}
\Map \{M_{\boldsymbol {\Xi} }\}({s}) = M_{\boldsymbol Z(s) }({s'}).
\end{equation*}
Because $M_{\boldsymbol Z(s) }$ is a box spline associated with direction set $\mathbf{Z}(s)$ varying with sinogram coordinate, the Theorem \ref{threorem:fan_parall} is proved. The geometric explanation of this proof is shown in Fig. \ref{fig:pointWiseBoxspline}.

\begin{figure}[ht]
\centering
{\includegraphics[width=0.9\columnwidth]{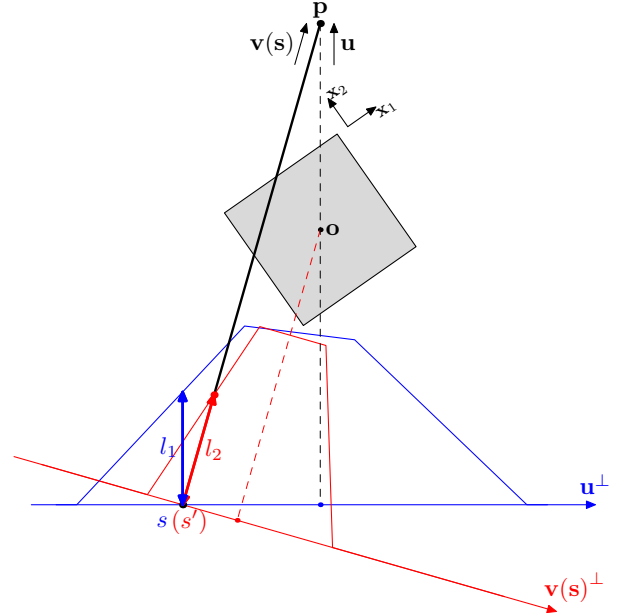}}
\caption{Fan-beam projection of box spline. The direction set of the projected box spline depends on coordinate $s$.}\label{fig:pointWiseBoxspline}
\end{figure}
\end{proof}
As a corollary we have:
\begin{corollary}
The fan-beam projection of the pixel-basis is a 2-direction box spline whose direction set varies with sinogram coordinates.
\end{corollary}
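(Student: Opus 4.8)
The plan is to obtain the corollary as a direct specialization of Theorem~\ref{threorem:fan_parall}. First I would recall that in $2$-D the pixel-basis is precisely the elementary box spline $M_{\boldsymbol{\Xi}}$ with $d=N=2$, where $\boldsymbol{\Xi}=[\boldsymbol{\xi}_1,\boldsymbol{\xi}_2]$ are the two edge directions of a pixel (for an axis-aligned unit pixel, $\boldsymbol{\xi}_1=(1,0)^T$ and $\boldsymbol{\xi}_2=(0,1)^T$). This identification is already established earlier in the excerpt (``the pixel-basis coincides with the first order box spline in 2-D''), so no new argument is needed here.

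Next I would invoke Theorem~\ref{threorem:fan_parall} verbatim with this choice of $\boldsymbol{\Xi}$: the fan-beam projection is $\Map\{M_{\boldsymbol{\Xi}}\}(s)=M_{\boldsymbol{Z}(s)}(s')$ with $\boldsymbol{Z}(s)=\mathbf{R}_{\mathbf{v}(s)^\perp}\boldsymbol{\Xi}$ and $s'=\mathbf{R}_{\mathbf{v}(s)^\perp}\mathbf{p}$. The crux is a dimension count: in $2$-D the hyperplane $\mathbf{v}(s)^\perp$ is one-dimensional, so $\mathbf{R}_{\mathbf{v}(s)^\perp}$ is a $1\times 2$ matrix and $\boldsymbol{Z}(s)=[z_1(s),z_2(s)]$ is a set of two scalars. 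Hence $M_{\boldsymbol{Z}(s)}=M_{z_1(s)}*M_{z_2(s)}$ is a univariate box spline assembled from exactly two elementary box splines, i.e. a $2$-direction box spline. Because $\mathbf{v}(s)$, and therefore the projector $\mathbf{R}_{\mathbf{v}(s)^\perp}$, changes with the sinogram coordinate $s$ in fan-beam geometry (in contrast with parallel geometry, where $\mathbf{R}_{\mathbf{u}^\perp}$ is fixed for a given view), the pair $(z_1(s),z_2(s))$ genuinely depends on $s$, which is the ``direction set varies with sinogram coordinates'' clause.

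The only point requiring care, and the closest thing to an obstacle, is the degenerate configuration in which a pixel edge $\boldsymbol{\xi}_i$ is parallel to the ray direction $\mathbf{v}(s)$: then $z_i(s)=0$ and the corresponding elementary factor $M_0$ acts as the identity under convolution, so the footprint momentarily collapses to a $1$-direction box spline (a scaled indicator of an interval). I would remark that this happens only for the isolated values of $s$ whose ray is aligned with one of the two pixel edges -- a measure-zero set -- and that with the standard convention $M_0*f=f$ the expression $M_{\boldsymbol{Z}(s)}$ remains uniformly valid, the generic count of two nonzero directions holding for almost every $s$. With that caveat noted, the corollary is immediate from Theorem~\ref{threorem:fan_parall}.
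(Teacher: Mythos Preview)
Your proposal is correct and follows essentially the same route as the paper: specialize Theorem~\ref{threorem:fan_parall} to $\boldsymbol{\Xi}=[\boldsymbol{\xi}_1,\boldsymbol{\xi}_2]$, count that $\mathbf{R}_{\mathbf{v}(s)^\perp}\boldsymbol{\Xi}$ has two scalar directions, and handle the degenerate $z_i(s)=0$ case via the convention that $M_0$ acts as the convolutional identity. The paper makes the same moves (using the notation $\zeta_i(s)$ in place of your $z_i(s)$ and writing $M_0(s)=\delta(s)$), so there is nothing to add.
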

When the directions in $\boldsymbol \Xi = [\boldsymbol \xi_1, \boldsymbol \xi_2]$ are orthogonal, 
$M_{\boldsymbol \Xi}$ describes the pixel-basis. In the sinogram domain, direct evaluation
of box spline is possible by using rectangular box function corresponding to segments for each non-zero direction $\zeta(s) \in \boldsymbol Z(s)$:
\begin{equation*}
\begin{split}
M_{ \zeta(s)}(s)
&= \frac{1}{| \zeta(s)|}{\rm box}(\frac{s}{|\zeta(s)|}) \\
&= \frac{u(s) - u(s - \zeta(s))}{|\zeta(s)|}\\
&= \Delta_{\zeta(s)}u(s),
\end{split}
\end{equation*}
where $u(s)$ is the step function, and $\Delta_{\zeta(s)}$ denotes forward-differencing with a step size $\zeta(s)$. When $\zeta(s) = 0$, the elementary box spline reduces to a delta function, $M_0(s) = \delta(s)$, that gets eliminated in the convolutions in $\boldsymbol Z(s)$. According to the convolutional property of box splines, the fan beam X-ray projection of a pixel-basis can be expanded as:
\begin{equation}\label{eq:boxsplineProjectFanbeam}
\begin{split}
\Map \{M_{ \boldsymbol{\Xi} } \} (s) &= M_{\boldsymbol Z(s)}(s') =  M_{ \mathbf{R}_{\mathbf{v}(s) ^{\perp} }
\boldsymbol{\Xi} }(s') \\
&=M_{ [\zeta_1(s), \zeta_2(s)] }(s')\\
&=(\Delta_{\zeta_1(s)}u * \Delta_{\zeta_2(s)}u)(\mathbf{R}_
{\mathbf{v}(s)^{\perp} }\mathbf{p})\\
&=\Delta_{\zeta_1(s)} \Delta_{\zeta_2(s)}(\mathbf{R}_
{\mathbf{v}(s)^{\perp} }\mathbf{p})_+.
\end{split}
\end{equation}
Here the projected basis is a \emph{2-direction box spline} whose direction set $\boldsymbol Z(s)$ varies with the sinogram coordinate $s$. This box spline can be evaluated by forward differencing applied to the one-sided power function (i.e., ramp function here) defined as: $y_+ = \max(y, 0)$.

\subsection{Detector Blur}
For detector blur in \emph{fan-beam} geometry, because of the non-parallel structure of 
rays, the convolution of elementary box spline in image domain does not hold. Nevertheless, we derive an effective blur ${\tau}'$, which intersects an area with
pixel very close to the area cut by the detector source-bin triangle (solid blue) as illustrated in Fig. \ref{fig:pointWiseBoxsplineBinWidth}.
To derive the ${\tau}'$, we use the fact that ${\tau}'$ is a \emph{perspective projection} of ${\tau}$ on the plane $\mathbf{v}(s)^{\perp}$. Therefore, let $\mathcal{B}_{\mathbf{v}(s)}^T = [\mathbf{R}_{\mathbf v(s)^\perp}^T, \mathbf v(s)]$ denote the matrix whose columns are the basis vectors
of the coordinate system built by $\mathbf{v}(s)^{\perp}$ and $\mathbf{v}(s)$, $\mathcal{B}_{P}$ is the perspective projection matrix. By utilizing the homogeneous 
transformation, ${\tau}'$ is computed as:
\begin{equation}\label{eq:newDelta}
{\tau}' = 
\frac{(\mathcal{B}_{P}\mathcal{B}_{\mathbf{v}(s)}\ProjT{(s+\frac{\tau}{2} }))_{x_1}}
{(\mathcal{B}_{P}\mathcal{B}_{\mathbf{v}(s)}\ProjT{(s+\frac{\tau}{2} }))_{x_2}} -
\frac{(\mathcal{B}_{P}\mathcal{B}_{\mathbf{v}(s)}\ProjT{(s-\frac{\tau}{2} }))_{x_1}}
{(\mathcal{B}_{P}\mathcal{B}_{\mathbf{v}(s)}\ProjT{(s-\frac{\tau}{2} }))_{x_2}} .
\end{equation}
The division involved in transform (\ref{eq:newDelta}) is
\emph{perspective division} that is requisite after perspective transform, and $\tau'$ is achieved by differing the two transformed coordinates. At last,
\begin{figure}[ht]
\centering
{\includegraphics[width=0.7\columnwidth]{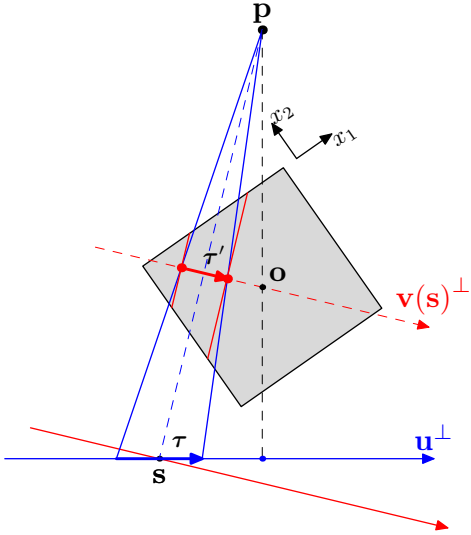}}
\caption{Projection of box spline with bin blur}\label{fig:pointWiseBoxsplineBinWidth}
\end{figure}
the combination of the effective blur (\ref{eq:newDelta}), equations
(\ref{eq:boxspineProjectionBinBlurParallel}) and 
 (\ref{eq:boxsplineProjectFanbeam}) from Theorem \ref{theory:detectorBlurParallel} and Theorem \ref{threorem:fan_parall} 
 yields the closed-form formula for X-ray transform of 
box spline in fan-beam with detector blur:
\begin{equation}\label{eq:pointwiseBoxSplineBlur}
\begin{split}
&\mathcal{P}_{\mathbf{u}, {\tau}} \{M_{ \boldsymbol{\Xi} } \} ({s}) 
= M_{ \mathbf{R}_{\mathbf{v}({s}) ^{\perp} } \boldsymbol{\Xi} }({s}') \\
&= (\Delta_{\zeta_1({s})}u * 
\Delta_{{\zeta}_2({s})}u * \Delta_{{\tau}'}u)
(\mathbf{R}_{\mathbf{v}({s})^{\perp} }\mathbf{p})\\
&=\frac{\Delta_{\zeta_1({s})}  \Delta_{\zeta_2({s}) }  \Delta_{{\tau}'}
(\mathbf{R}_{\mathbf{v}({s})^{\perp} }\mathbf{p})_+^2}{2!}.
\end{split}
\end{equation}
Equation (\ref{eq:pointwiseBoxSplineBlur}) is the analytical closed-form approximation
of X-ray transform of a box spline in fan-beam geometry.
In a common fan-beam setup, the size of pixels and detector bin width are usually very
small and projector-detector distance is much larger so that the approximation made by the effective blur is relatively highly accurate and this will be demonstrated in experiment section.

\section{Experiments and results}

The goal in the proposed convolutional non-separable footprint (CNSF) formalism is to provide an efficient and accurate method to model the forward and back-projection in fan-beam X-ray CT problem, as these operations are the fundamental computations in almost all the iterative reconstruction algorithms. To examine the accuracy and efficiency of our CNSF framework, we compare it with the
state-of-art algorithms designed for efficient computation of forward and back-projection, namely: the Separable Footprints (SF)~\cite{long20103d} and the Look-up Table-based Ray Integration (LTRI)~\cite{ha2018look}. For a \emph{reference projector} (Ref), as the function  (\ref{eq:boxsplineProjectFanbeam}) is the exact formula for X-ray transform
of pixel-basis in fan-beam without detector blur, we use symbolic function to evaluate the integral in an interval specified by detector bin width with relative error tolerance to $0$ and absolute error tolerance to $10^{-12}$, where the computation is extremely expensive. 

\subsection{Forward Projection for Single Pixel}
In order to demonstrate accuracy of approximation made by our method, we first show the ``microscopic'' view of 
the projections of a single pixel achieved by different methods.
We simulate a 2-D fan-beam flat detector X-ray CT system with a single pixel centered at rotation center with size $(1 \text{ mm} \times 1\text{ mm})$. The detector bin width $\tau$ is $0.5$ mm, the projector to rotation center distance $D_{po} = 3$ mm, and detector to rotation center distance $D_{so} = 3$ mm.  
In order to visualize the approximations made by different methods as continuous functions, we over-sample the projections
by setting the interval of detectors $\Delta_s$ to $0.01$ mm and the number of detectors $N_s$ to  $601$. 

\begin{figure*}[ht!]
     \begin{minipage}[l]{0.5\columnwidth}
         \centering
         \includegraphics[width=4.5cm]
         {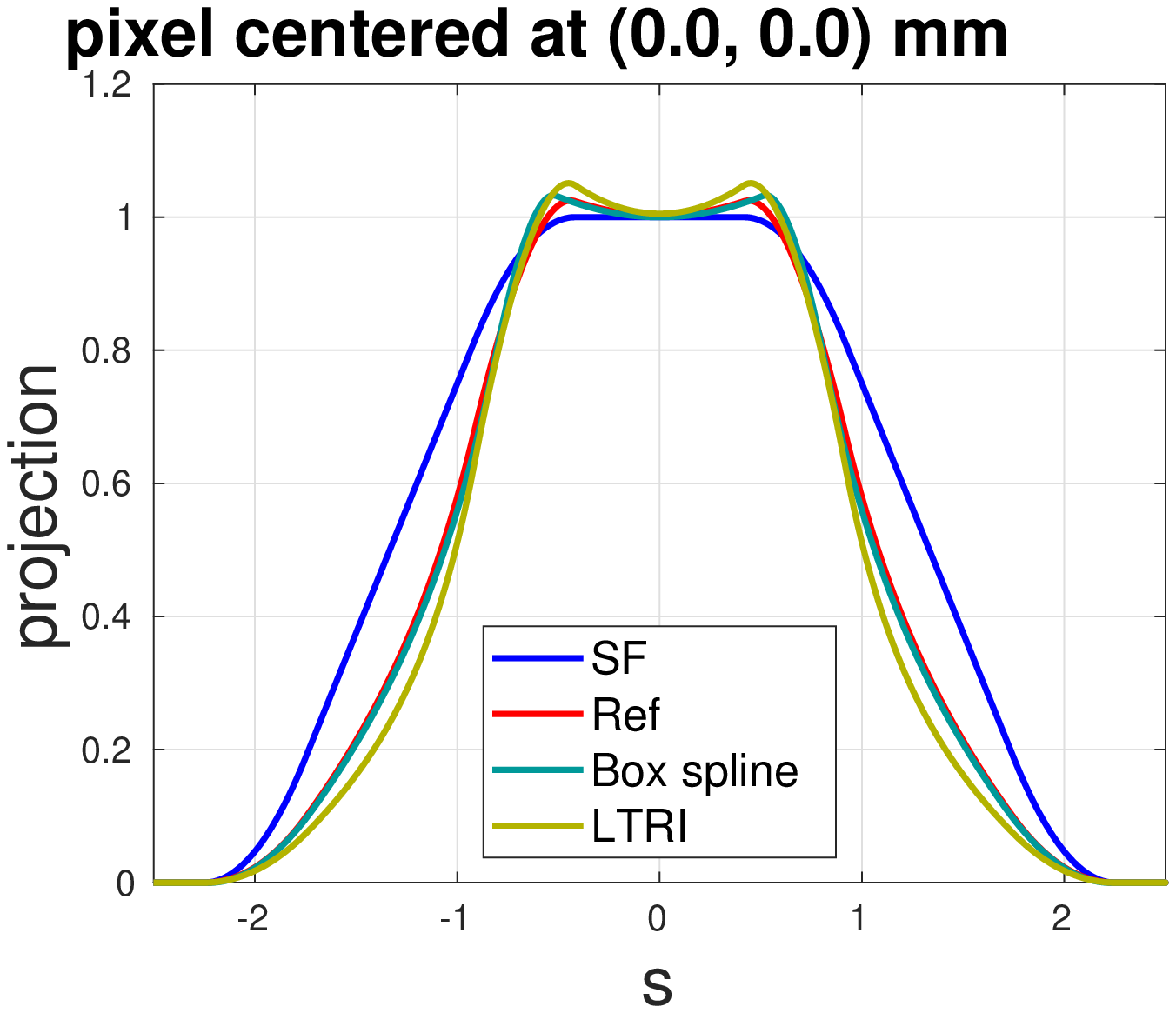}
         \subcaption{}\label{fig:AAA}
     \end{minipage}
          \begin{minipage}[l]{0.5\columnwidth}
         \centering
         \includegraphics[width=4.5cm]
         {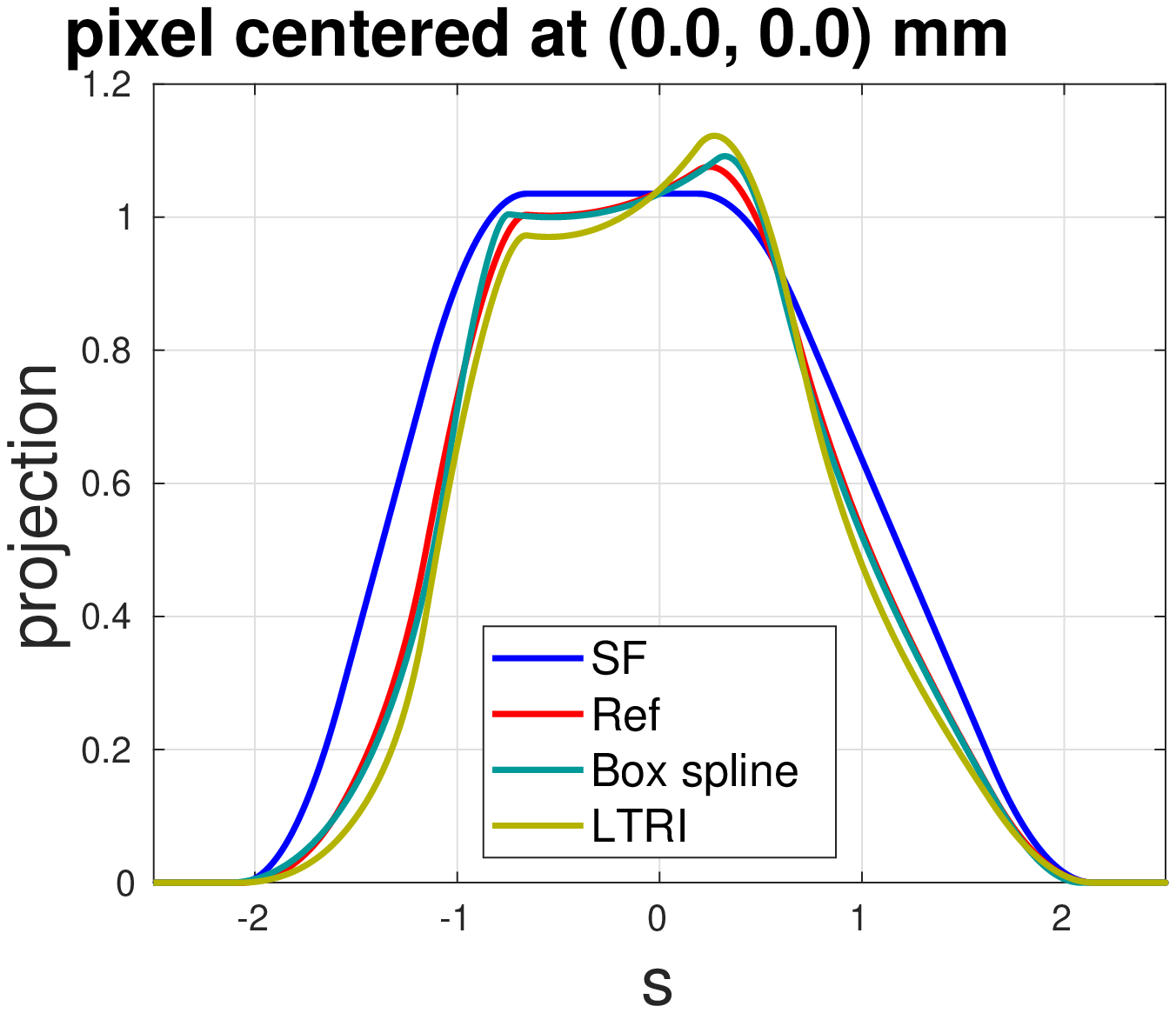}
         \subcaption{}\label{fig:BBB}
     \end{minipage}
          \begin{minipage}[l]{0.5\columnwidth}
         \centering
         \includegraphics[width=4.5cm]
         {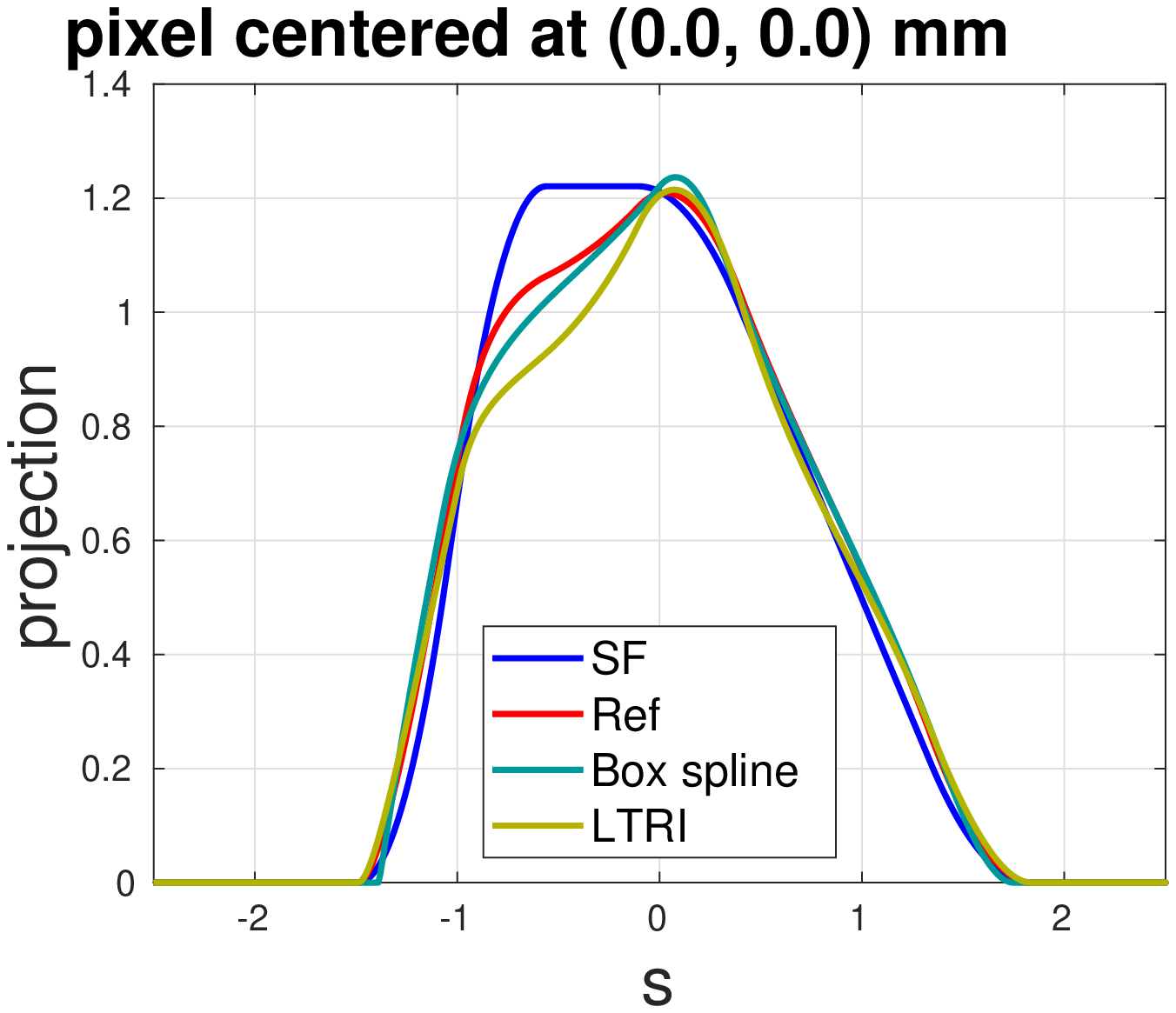}
         \subcaption{}\label{fig:CCC}
     \end{minipage}
          \begin{minipage}[l]{0.5\columnwidth}
         \centering
         \includegraphics[width=4.5cm]
         {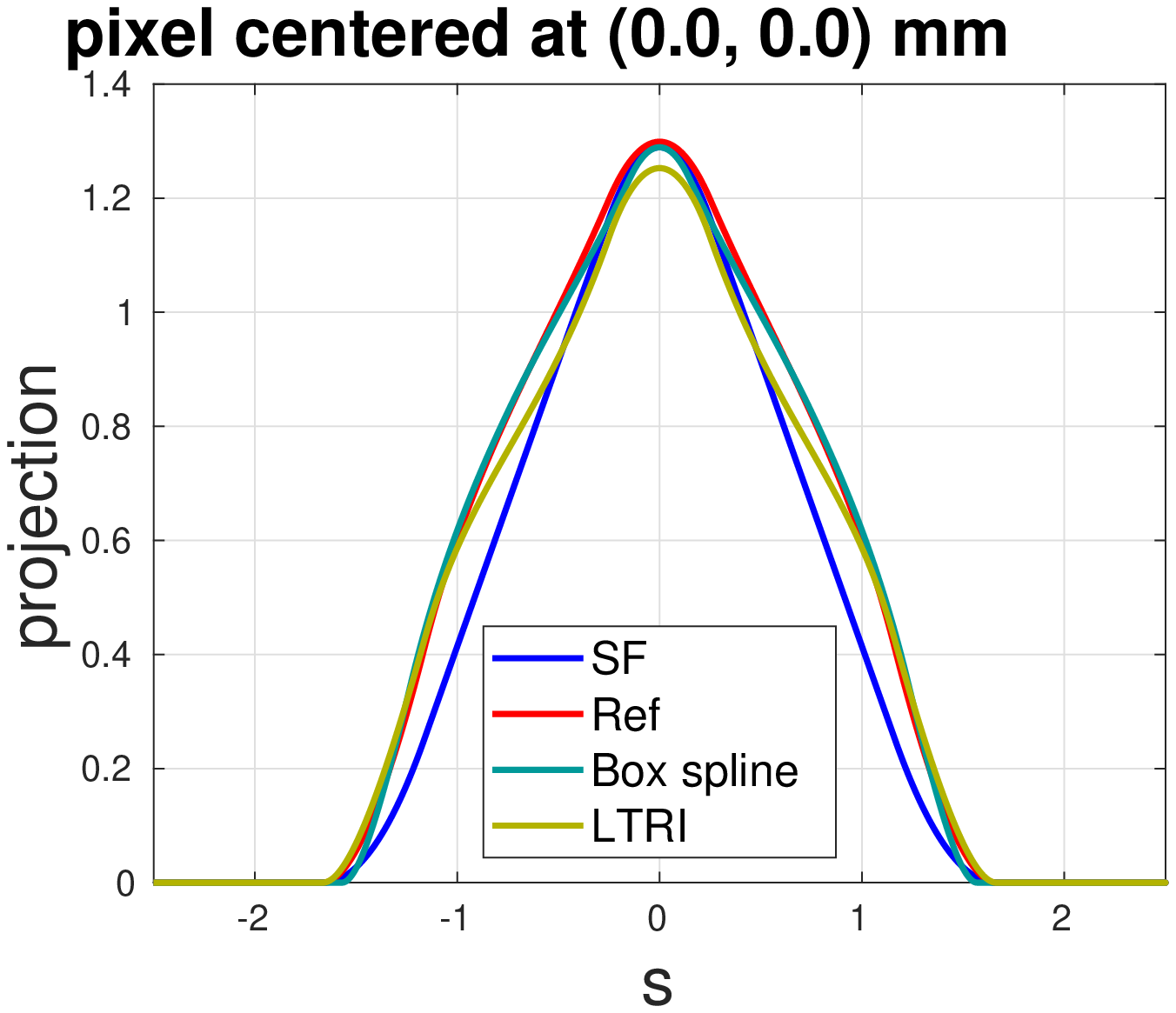}
         \subcaption{}\label{fig:DDD}
     \end{minipage}
   \caption{From (a)-(d), the projection angles are $0^{\circ}$, 
   $15^{\circ}$, $35^{\circ}$ and $45^{\circ}$ respectively. }
  	\label{fig:experiment_1PixelProjBlur}
\end{figure*}

We use the MIRT toolbox \cite{irtFessler} and LTRI source code  \cite{ltriHa} to generate the comparisons and we implement the CNSF
projector in CUDA. Fig. \ref{fig:experiment_1PixelProjBlur} shows the projection of pixel-basis for all four projectors with detector blur. Our proposed method in all these angles is capable to closely approximate the nonlinear functions. SF can only use trapezoid-shape functions to approximate the functions, but obviously in most situations, the analytical projection function has irregular shape. Although LTRI can also fit the curve, the accuracies are much less than our method in most locations in
sinogram domain.

The geometric setting in above experiment is to illustrate the performance of the approximations made by the compared algorithms in extremely detailed level, while
in order to show the comprehensive performance of the projection in more practical setting, we compare the 
\emph{maximum errors }of forward projectors by defining:
\begin{equation}
e(\mathbf{u}) = \max_{s \in \mathbb{R} }|F(s, \mathbf{u}) - F_{ref}(s, \mathbf{u})|,
\end{equation}
where $F$ is the projection function approximated by any of SF,  LTRI, and CNSF projectors, and $F_{ref}$ is generated by {reference} projector.
The pixel is centered at $(0,0)$ mm and $(100.5, 50.5)$ mm respectively with size $(1 \text{ mm}\times 1\text{ mm}) $, 
$\tau = 0.5$ mm, $\Delta_s = 0.5$ mm, $D_{po} = 200$ mm, and $D_{so} = 200$ mm. Since the pixel centered at $(0, 0)$ mm is symmetric in all four
quadrants, we only evenly select $90$ angle over $90^{\circ}$ that is shown in 
Fig. \ref{subfig:maxErrorA}, while in Fig. \ref{subfig:maxErrorB},  
$360$ angles over $360^{\circ}$ are uniformly selected.
\begin{figure}[ht!]
    \begin{subfigure}{.47\linewidth}
        \includegraphics[scale=0.18]{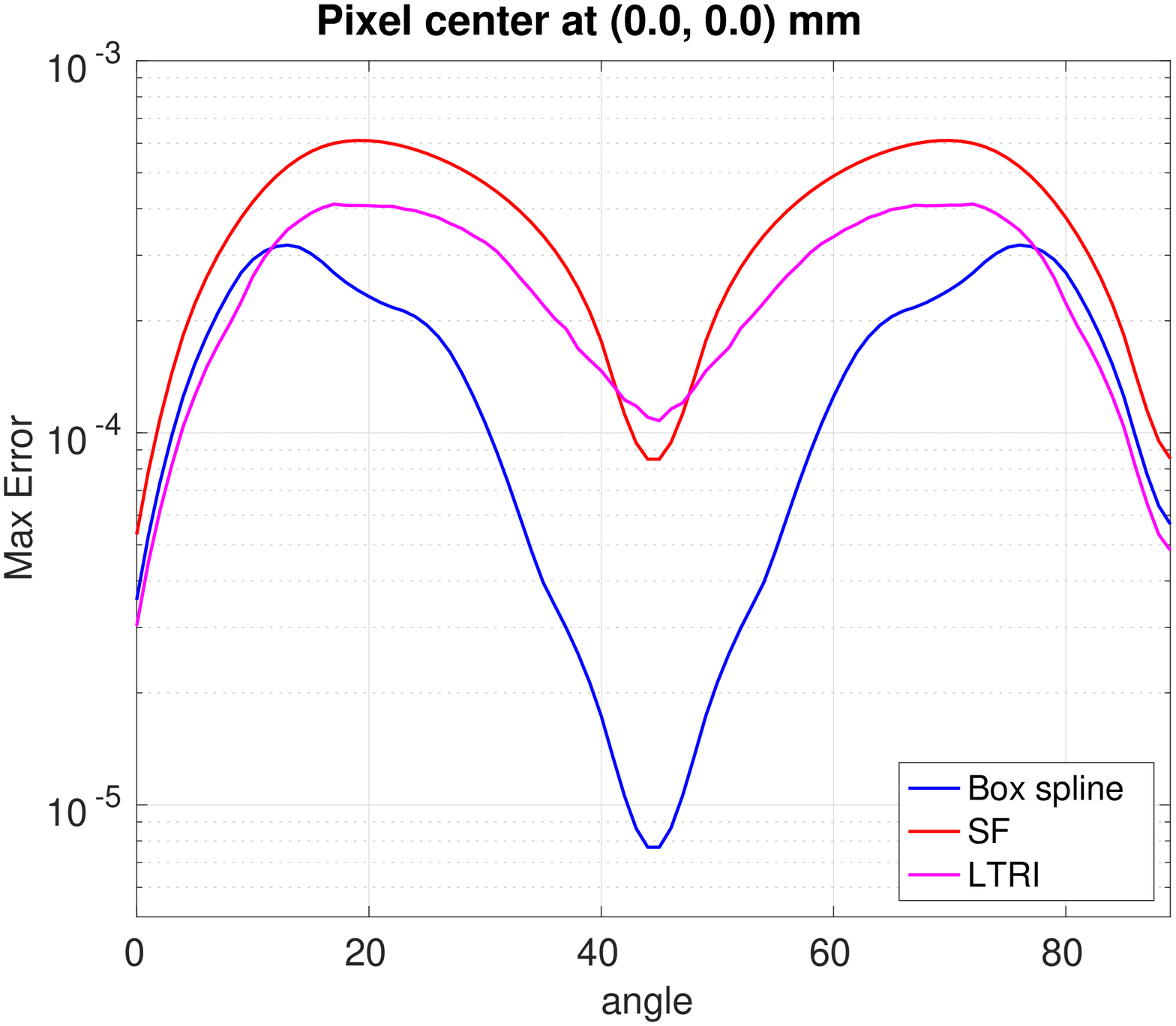}
        \caption{}\label{subfig:maxErrorA}
    \end{subfigure}
    \hskip1em
    \begin{subfigure}{.47\linewidth}
        \includegraphics[scale=0.18]{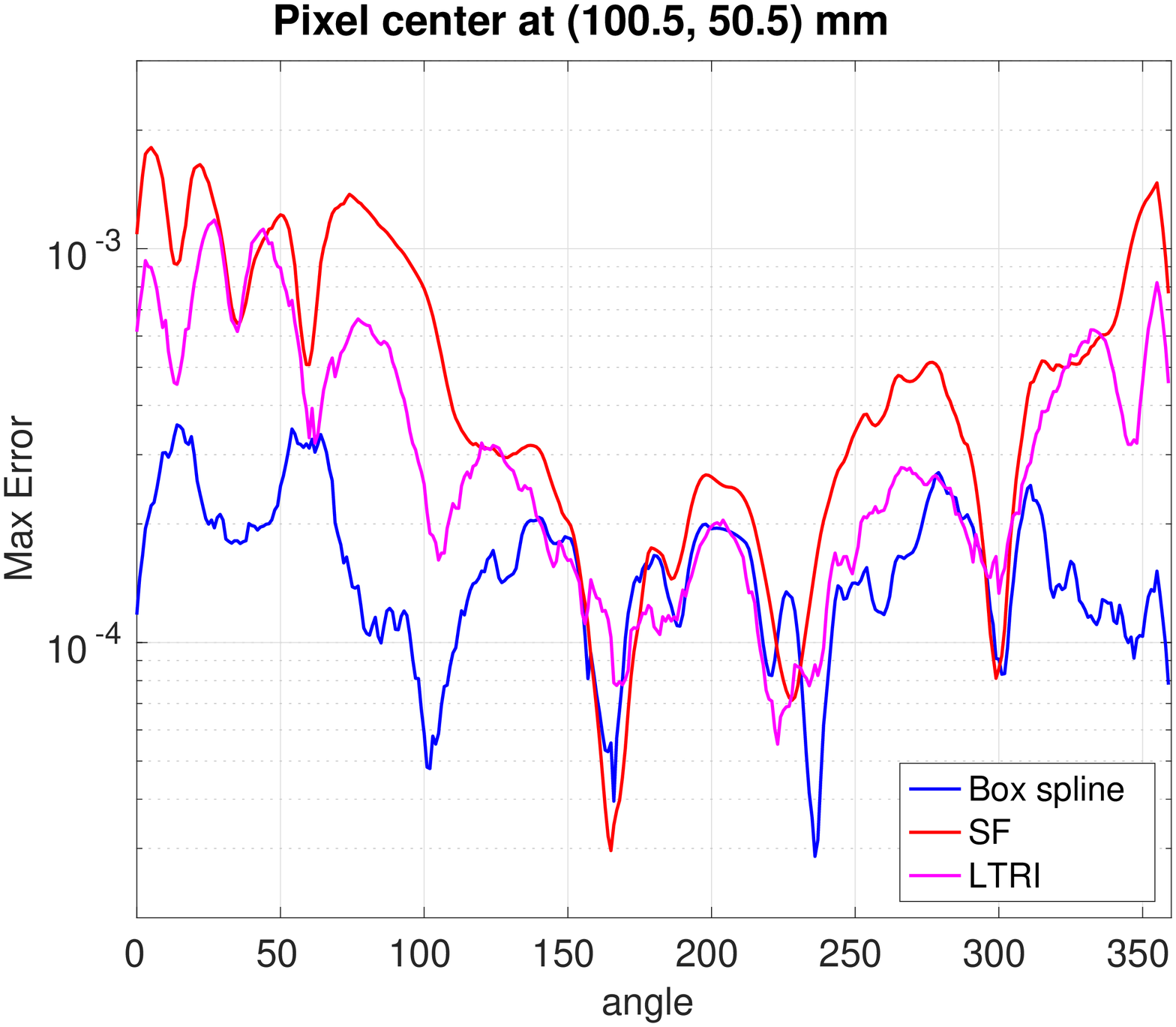}
        \caption{}\label{subfig:maxErrorB}
    \end{subfigure}
    \caption{Maximum errors comparison among CNSF, SF and LTRI projectors for a pixel centered at different locations.}\label{fig:experiment_BoxplineProjBlur}
\end{figure}

When the pixel is located at origin, the errors presented in LTRI are slightly smaller than CNSF at the angles ranged from $0^{\circ}$ to $10^{\circ}$ and $78^{\circ}$ to $90^{\circ}$, whereas our method performs much better at other angles. If pixel offsets a large distance from origin, the accuracy will degrade as in \ref{subfig:maxErrorB}, but our method suffers the least impact from the asymmetric location 
of pixel. Overall, the proposed method outperforms other methods in most angles of forward projections in respect of accuracy.

\subsection{Radon Transform of Image}
We compare the accuracies of forward projection for complete 2-D images with a size of $(128 \text{ mm} \times 128 \text{ mm})$ and a size of $(256 \text{ mm}\times 256\text{ mm})$.  The
numbers of detectors $N_s$ are $409$ and $815$ respectively, and they are spaced by $\Delta_s = 1$ mm with bin width $\tau = 0.5$ mm.

Fig. \ref{fig:RadonTransformCompare} shows the absolute errors of forward projections from 360 angles of Shepp-Logan and Brain phantom, and the errors are scaled by $512$ and $1024$ respectively for visualization purposes. The projector-rotation center
and detector-rotation center distance pair, $(D_{po}, D_{so}$), is $(200,200)$ mm for Shepp-Logan and $(400, 400)$ mm for brain. The results substantiate the significant improvements in accuracy compared to LTRI and SF.
\begin{figure}[ht!]
\begin{tabular}{ccccr}
\includegraphics[width=0.21\linewidth]{./experiment_radon_groundTruth}&
\includegraphics[width=0.21\linewidth]{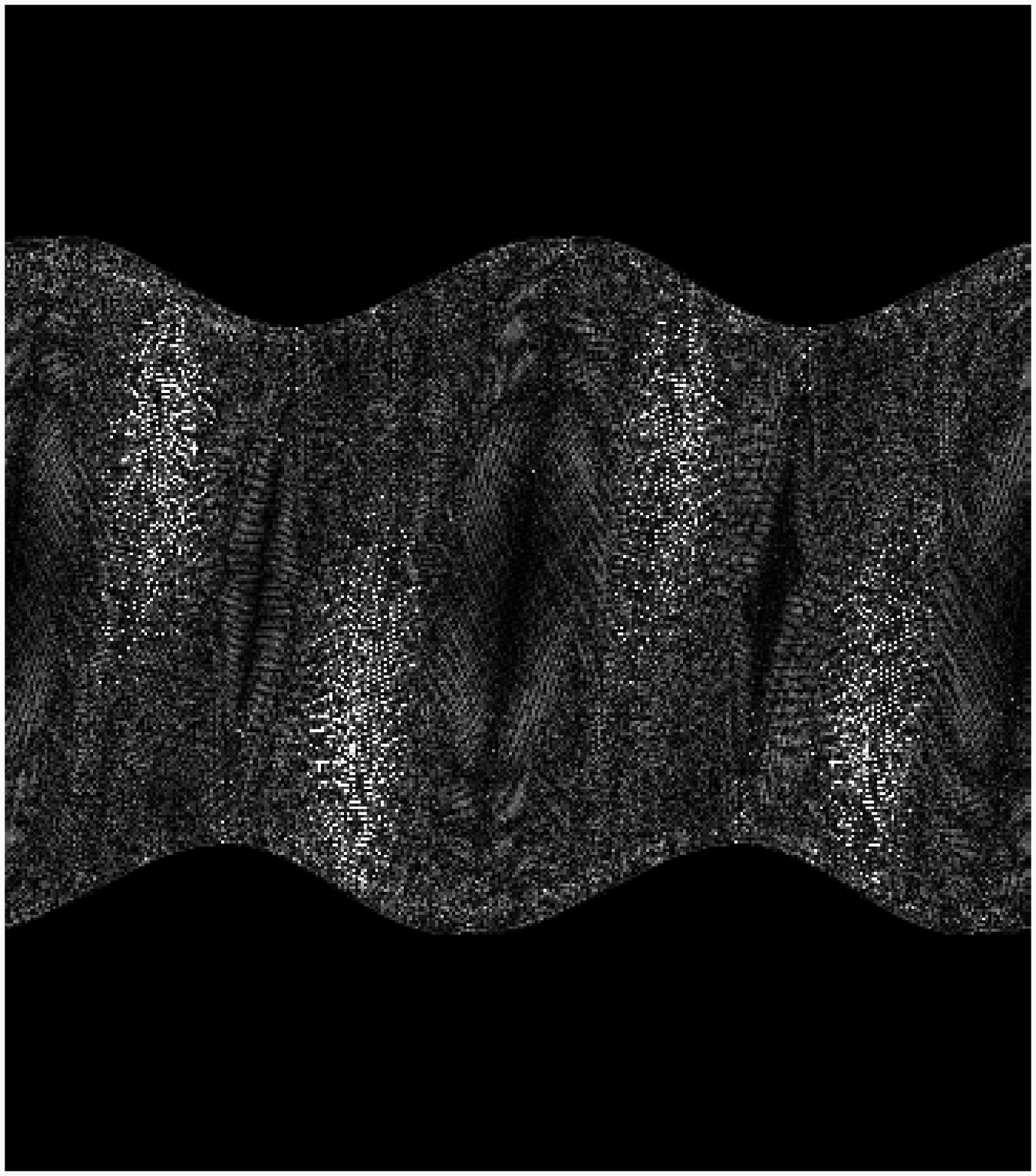}&
\includegraphics[width=0.21\linewidth]{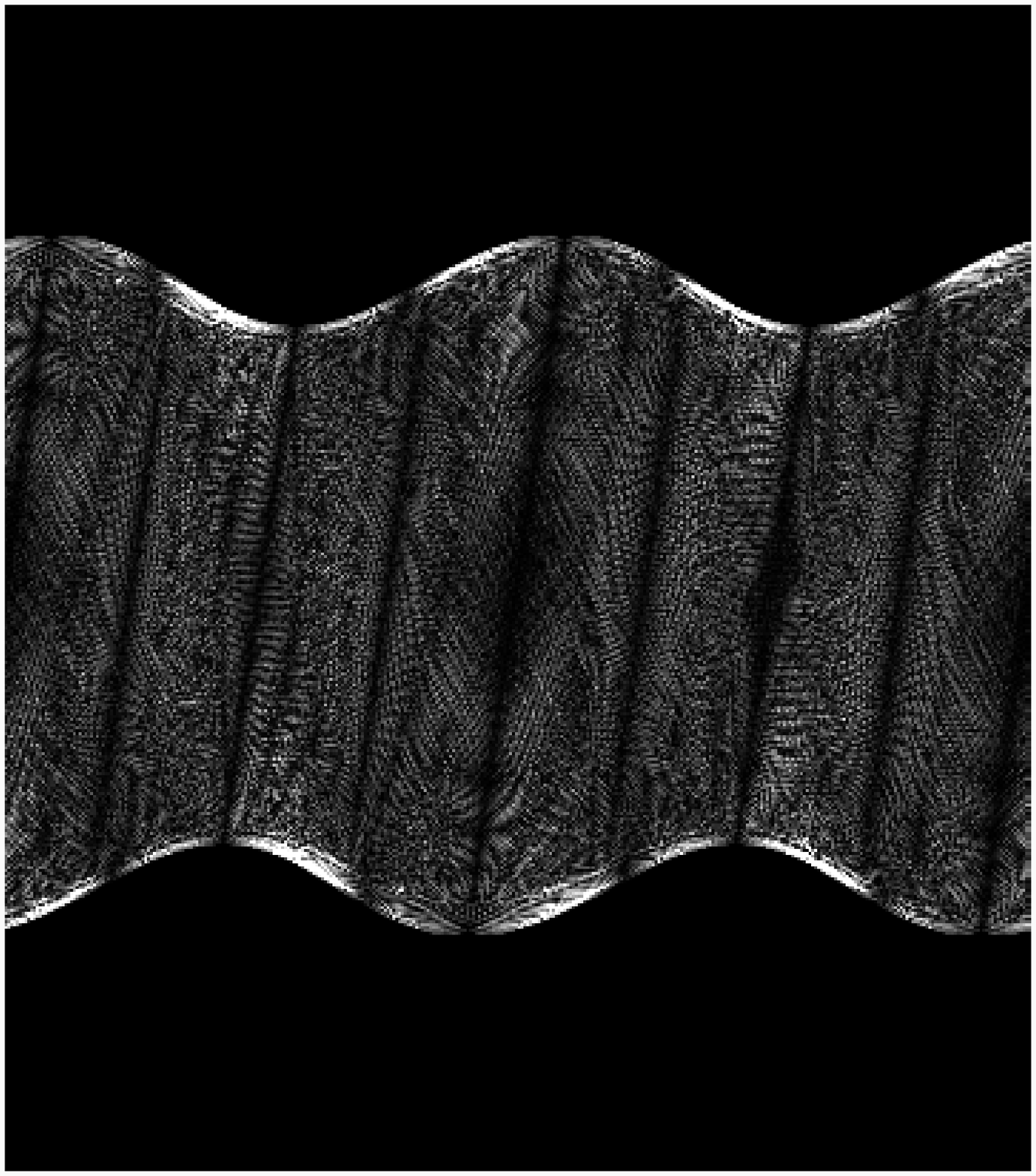}&
\includegraphics[width=0.21\linewidth]{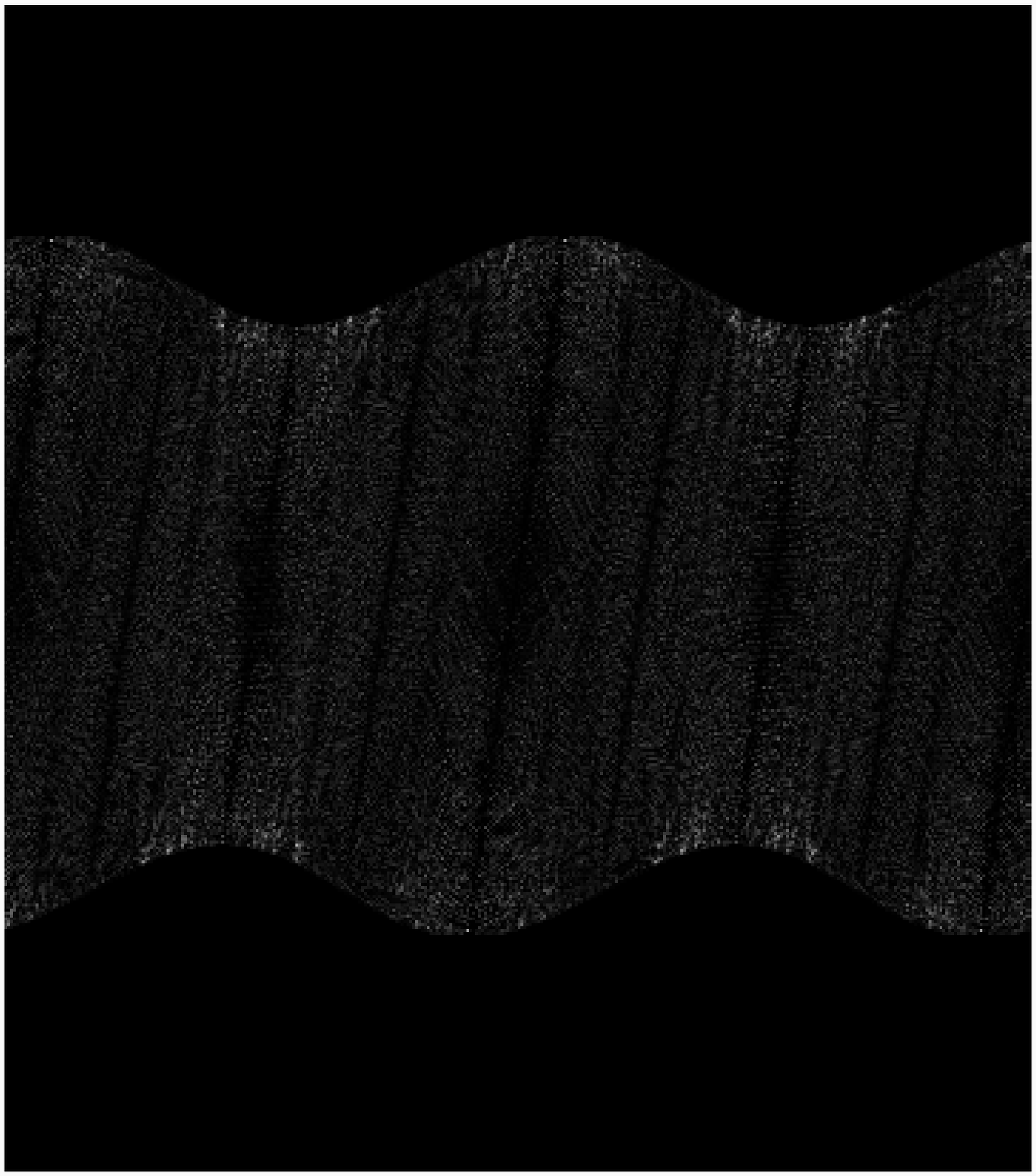}\\
\includegraphics[width=0.21\linewidth]{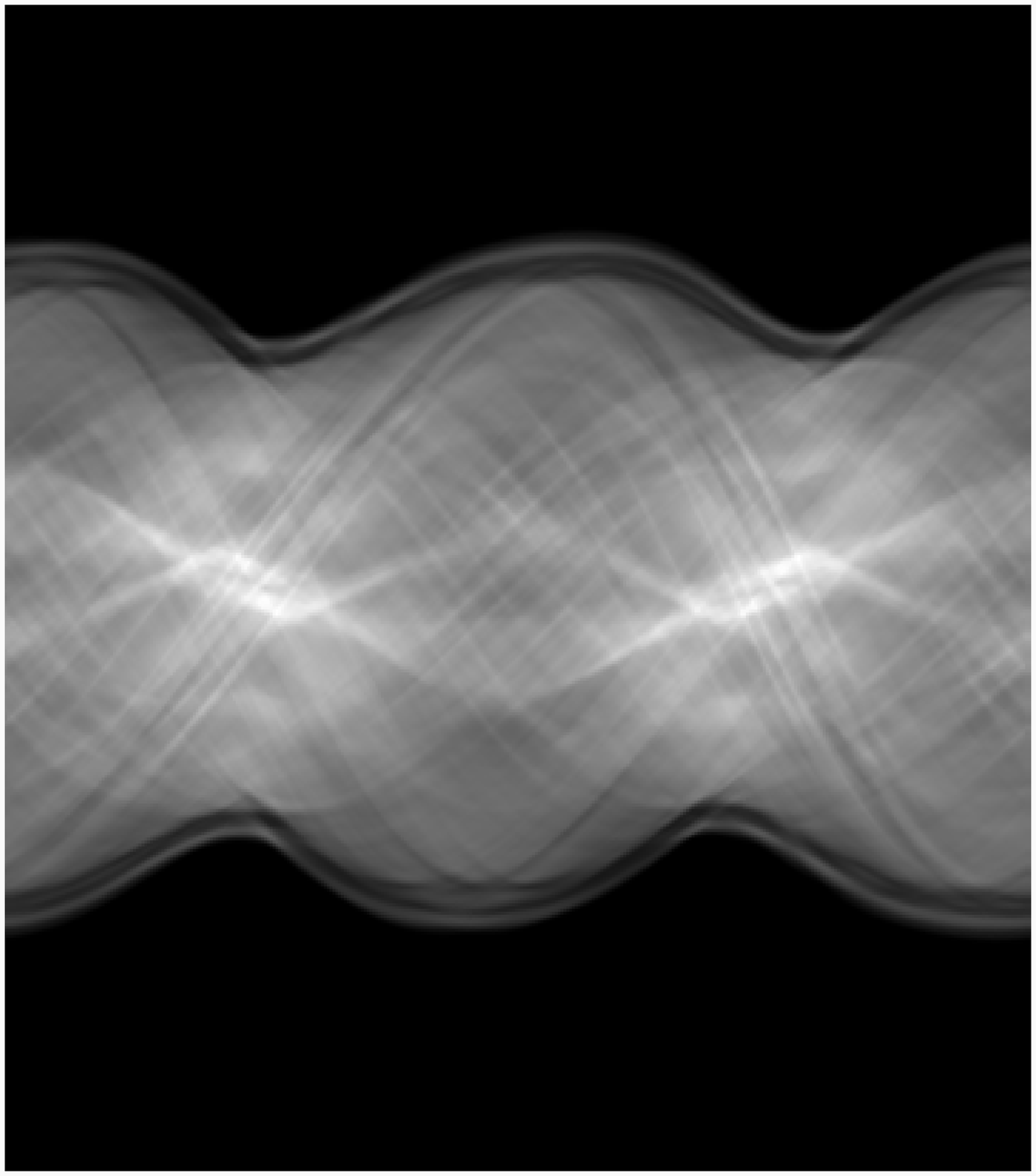}&
\includegraphics[width=0.21\linewidth]{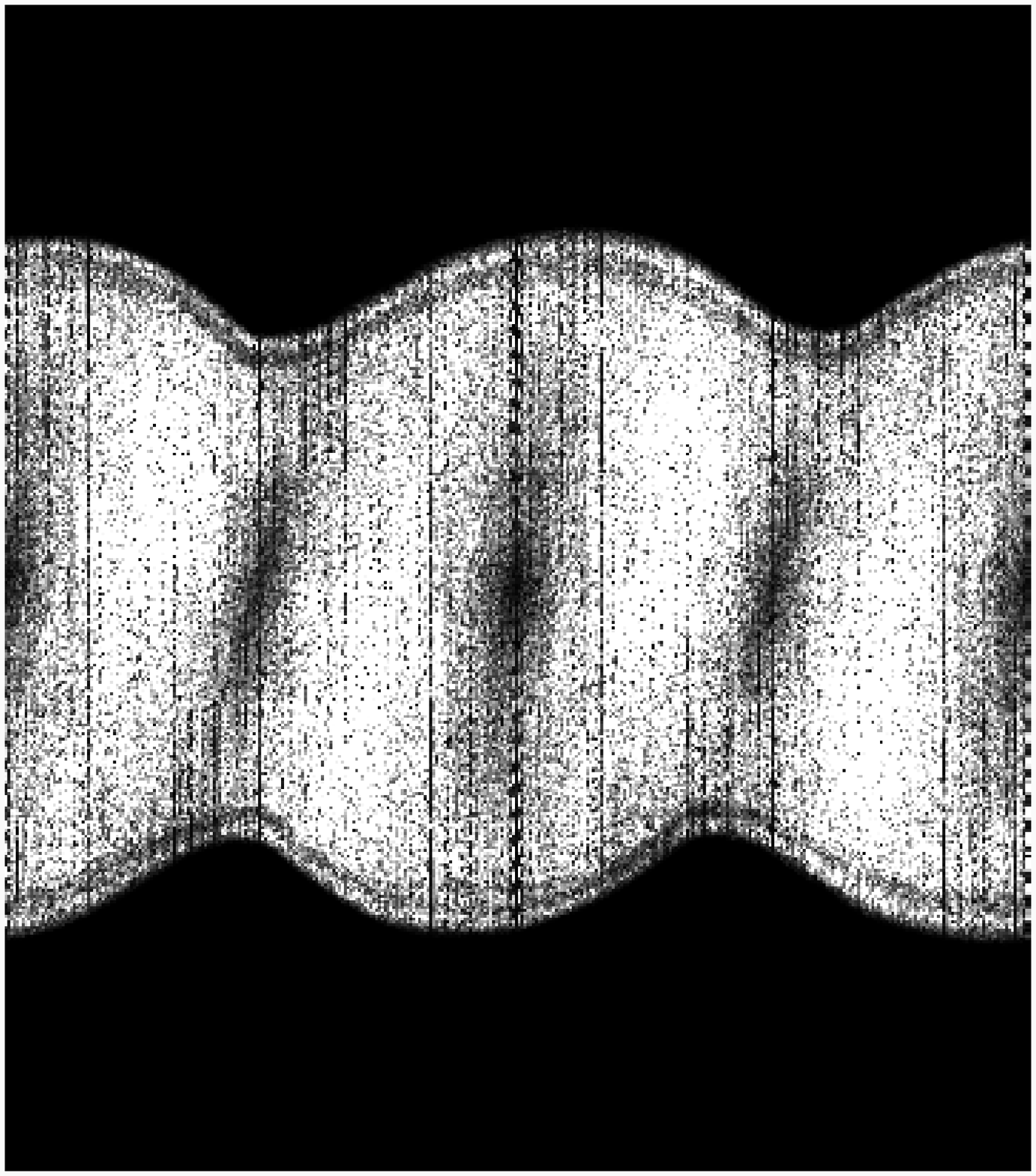}&
\includegraphics[width=0.21\linewidth]{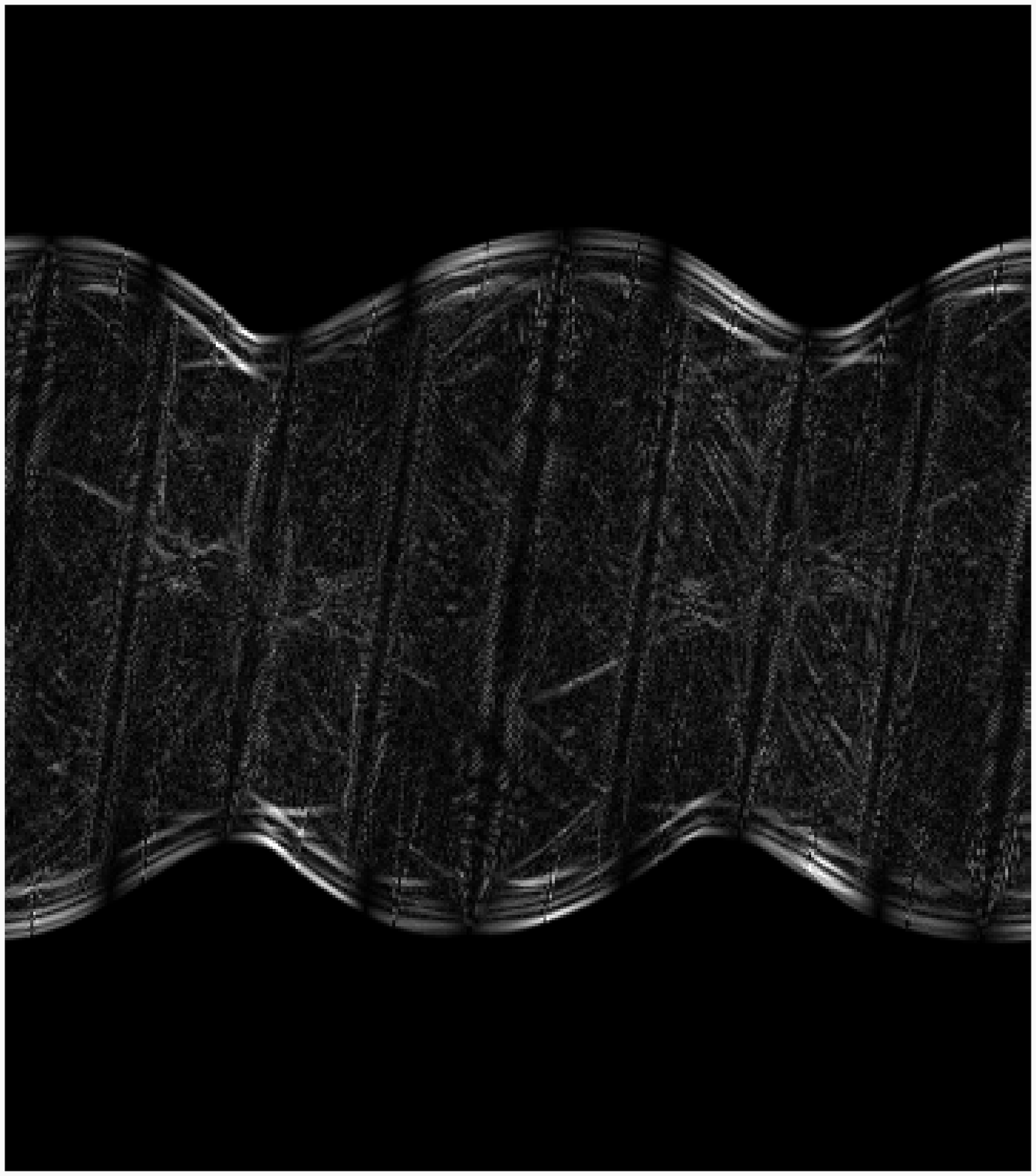}&
\includegraphics[width=0.21\linewidth]{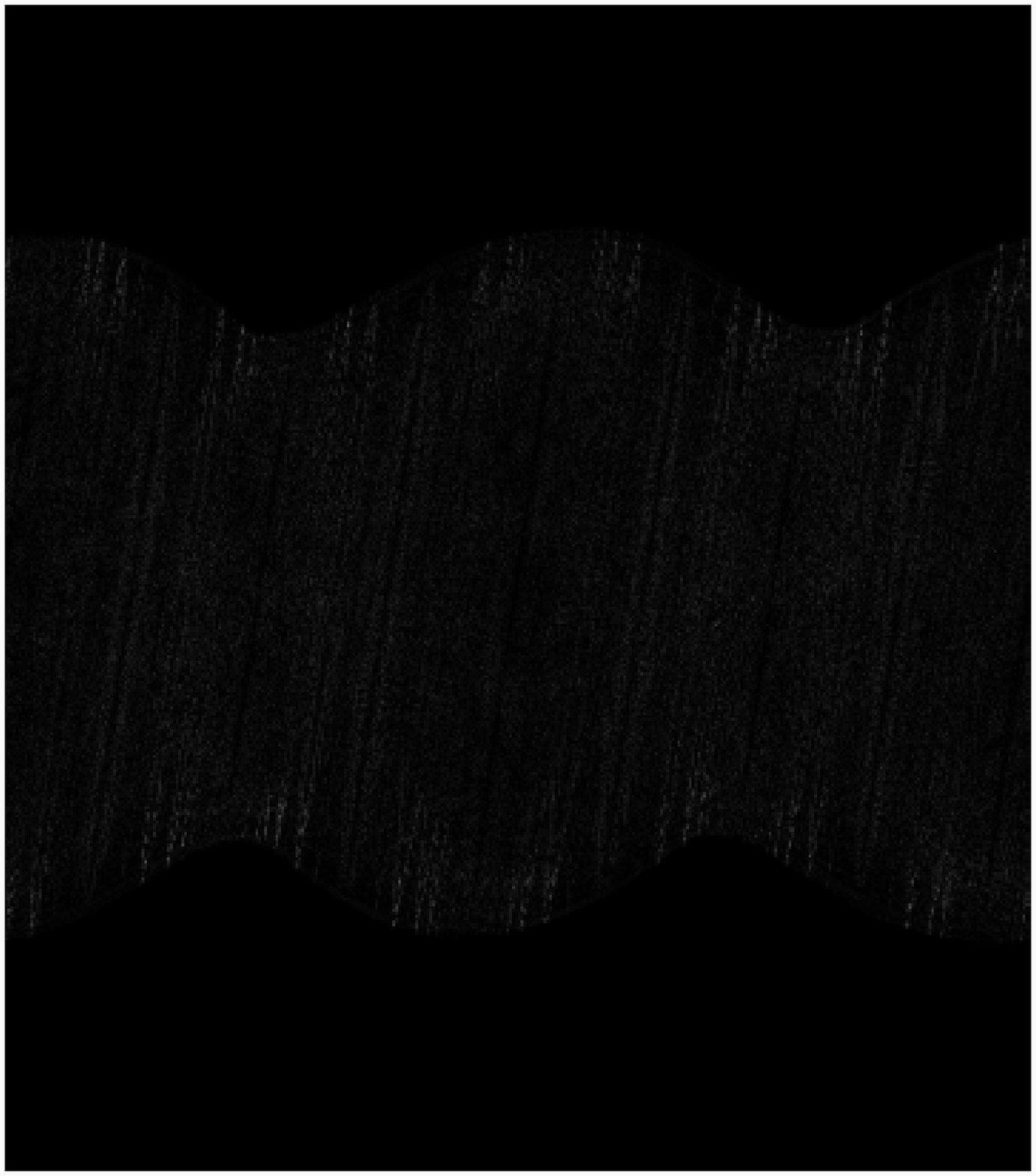}\\
  Ref&LTRI&SF&CNSF
\end{tabular}
\caption{Absolute error in the projection (sinogram) domain. First row shows the projection
of Shepp-Logan dataset with size of $128 \text{ mm} \times 128\text{ mm}$ while second row is brain dataset with
size of $256 \text{ mm}\times 256\text{ mm}$.} \label{fig:RadonTransformCompare}
\end{figure}
It is also worth noting the fact that when the projector and detector are located farther away each other, as the second row, the error becomes smaller, which also corroborates our statement in section \ref{sec:fanbeamXray}.

\subsection{Time performance of Forward and Back-projection}
The other important advantage of our proposed method is
its ability to achieve high performance with on-the-fly computations of the forward projection (FP) and back-projection (BP), eliminating the
needs to store the system matrix and/or a look-up table. This is accomplished
by efficient evaluation of the box spline on the right hand
side of (\ref{eq:pointwiseBoxSplineBlur}).
In this experiment, we simulate an flat detector fan-beam X-ray CT system with $360$ angles over $360^{\circ}$ and bin width $\tau = 1$ mm. The image resolutions used in the simulation are ($64 \text{ mm}\times 64\text{ mm} $), ($128 \text{ mm}\times 128 \text{ mm}$), ($256 \text{ mm}\times 256\text{ mm}$), ($512\text{ mm}\times 512\text{ mm}$), ($1024\text{ mm}\times 1024\text{ mm}$), and ($2048\text{ mm}\times 2048\text{ mm}$). The images are all-ones images (the intensity of each pixel in the image is 1) for equitable comparison. The detector bin is spaced by $\Delta_s = 1$ mm, and the numbers of the detector $N_s$ are $205$, $409$, $815$, $1627$, $3250$, and $6499$ corresponding to the different image resolutions. 
In order to adapt different field-of-views and image sizes, several projector-rotation center and detector-rotation center distances are selected as $(D_{po}, D_{so})$ = 
($100, 100$) mm, ($200, 200$) mm, ($400, 400$) mm,  ($800, 800$) mm, 
($1600, 1600$) mm, and ($3200, 3200$) mm.
Fig. \ref{fig:timeComparisonWithLTRI} shows our speedup factors over LTRI for the average projection of all angles. All experiments are performed on NVIDIA-GTX 1080 GPU with CUDA-10.0, Intel i7-6800K 6-core CPU with 3.40 GHz.
\begin{figure}[ht]
\centering
{\includegraphics[width=1\columnwidth]{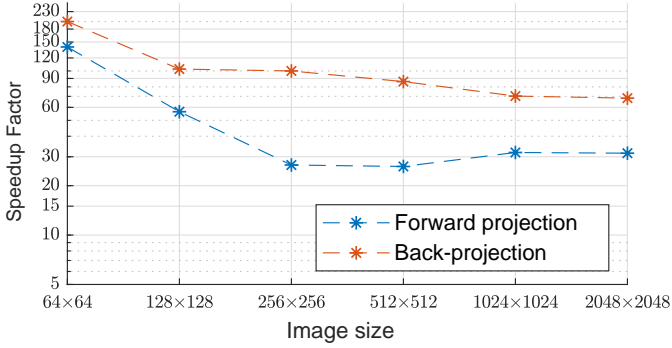}}
\caption{Run time comparisons with LTRI}
\label{fig:timeComparisonWithLTRI}
\end{figure}
Elimination of the necessity to access a look-up table leads
to high throughput in GPU implementations. Our method does not store any intermediate data for projections, unlike the precomputed table in LTRI. Therefore, there is not a lot of GPU
memory access, which is usually the bottleneck in GPU computing, in our implementation.
It is also noteworthy that the speedup in back-projection is always higher than forward projection in
all resolutions. The reason for this phenomenon is that in forward projection, each projection value
is weighted sum of several pixels, thus, in CUDA implementation, each kernel thread will
execute the atomic add instruction that degrades the efficiency of parallel computing.
This serialized behavior from atomic operation occurs much less in back-projection of our implementation leading to improved speedup.

Since there is no publicly available GPU version of SF method, we also implement the CPU version of proposed method with Intel Threading Building Blocks (TBB) library that can parallelize code and compare the run time with SF in CPU parallel computing.
\begin{figure}[ht]
\centering
{\includegraphics[width=1\columnwidth]{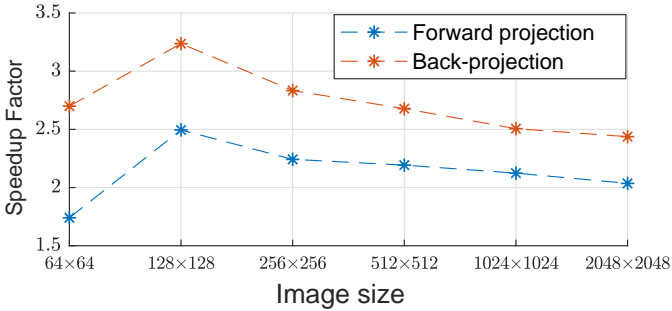}}
\caption{Run time comparisons with SF}
\label{fig:timeComparisonWithSF}
\end{figure}
Fig. \ref{fig:timeComparisonWithSF} shows the speedup of average projection of
all angles over SF. Since the computational resources on CPU are much less than
GPU, the speedups are not so significant as ones achieved by GPU. However, our method can still achieve as least 2 times speedups over SF.

\subsection{Reconstruction Performance}
To evaluate the improvements of accuracy in our method in the actual 
reconstruction problems, we perform experiments where the measurement was
generated by the \emph{reference projector} and reconstructed with different approximating projectors (i.e., SF, LTRI and CNSF). 
The optimizer used in the experiments is
adaptive steepest descent POCS regularized by total-variation (TV) called
ASD-POCS \cite{sidky2008image} and all the
hyper-parameters are adjusted to achieve the highest reconstruction quality for each model.

In this experiment, we use Shepp-Logan and the Brain phantom with resolution ($128 \text{ mm}\times 128 \text{ mm}$)  and ($256 \text{ mm}\times 256\text{ mm}$) respectively as the benchmark dataset. The simulated flat detector X-ray system is configured with $N_s = 409$, $D_{po} = 200$ mm, $D_{so} = 200$ mm and $N_s = 815$, $D_{po} = 400$ mm, $D_{so} = 400$ mm. The detectors are spaced by $\Delta_s = 1$ mm with bin width $\tau = 1$ mm and the projectors are uniformly spaced over $360^{\circ}$. Fig. \ref{fig:reconstructionSheppLogan} shows the reconstruction result of Shepp-Logan phantom. The (imperfect) reconstruction achieved by reference
projector in Fig. \ref{fig:reconstructionSheppLoganRef} illustrates the
practical reconstruction problem from limited-view projection. The approximating projector made by LTRI results in a less accurate reconstruction (the resolution of look-up table in LTRI is $10000 \times 180$). SF makes a slightly more accurate approximation and achieves a little higher quality over LTRI. Our method provides a reconstruction
that can achieve the same quality as the one provided by reference projector.

\begin{figure}[ht!]
    \centering
\hskip1em
    \begin{subfigure}{.37\linewidth}
        \includegraphics[scale=0.4]{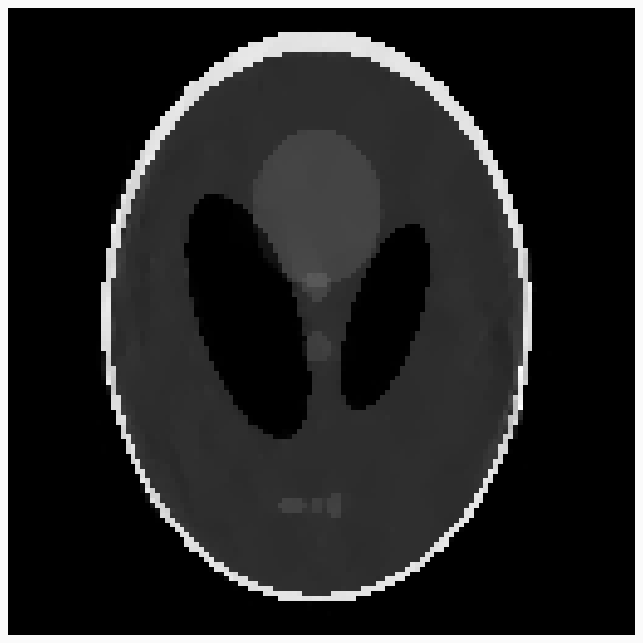}
        \caption{Ref, SNR = 26.39dB}\label{fig:reconstructionSheppLoganRef}
    \end{subfigure}
    \hskip3em
    \begin{subfigure}{.4\linewidth}
        \includegraphics[scale=0.4]{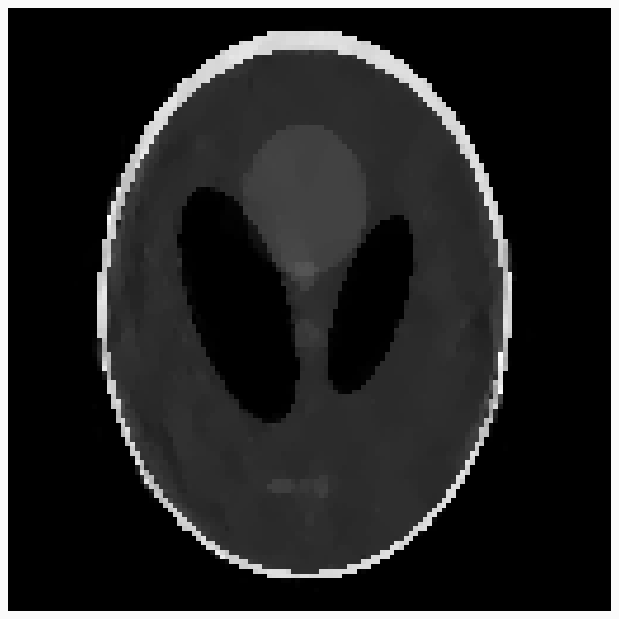}
        \caption{LTRI, SNR = 25.63dB}
    \end{subfigure}
	\vskip2em
\hskip1em
    \begin{subfigure}{.37\linewidth}
        \includegraphics[scale=0.4]{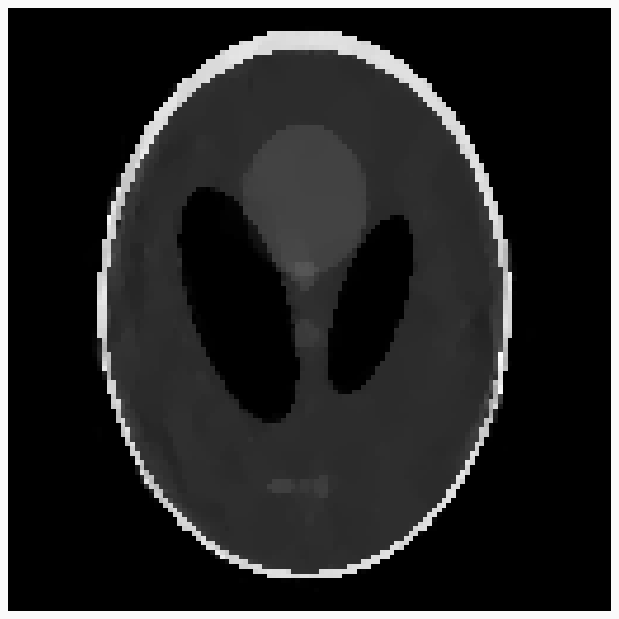}
        \caption{SF, SNR = 25.80dB}
    \end{subfigure}
	\hskip3em
	\begin{subfigure}{.4\linewidth}
     	\includegraphics[scale=0.4]{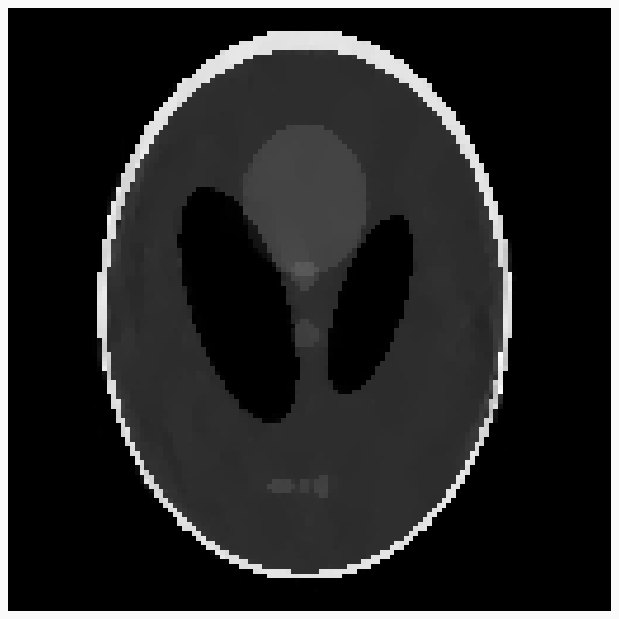}
        \caption{CNSF, SNR = 26.39dB}
    \end{subfigure}
    \caption{Reconstruction of Shepp-Logan phantom from $16$ uniformly spaced projections using ASD-POCS. }\label{fig:reconstructionSheppLogan}
\end{figure}

Fig. \ref{fig:reconstructionBrain} shows the reconstruction of the brain phantom. In order to evaluate the impact of the accuracy of the
forward model in image reconstruction, we visualize the differences of
all reconstructions from the reconstruction provided by 
reference model that is shown in \ref{fig:reconstructionBrainRef}. For 
visualization purpose, we scale the errors by appropriate factors shown
in captions. The SNRs for these results are (Ref) $19.38$dB, (LTRI) $18.96$dB, (SF) $19.07$dB, (CNSF) $19.38$dB respectively.
\begin{figure}[ht!]
    \centering
    \begin{subfigure}{.45\linewidth}
        \includegraphics[scale=0.32]{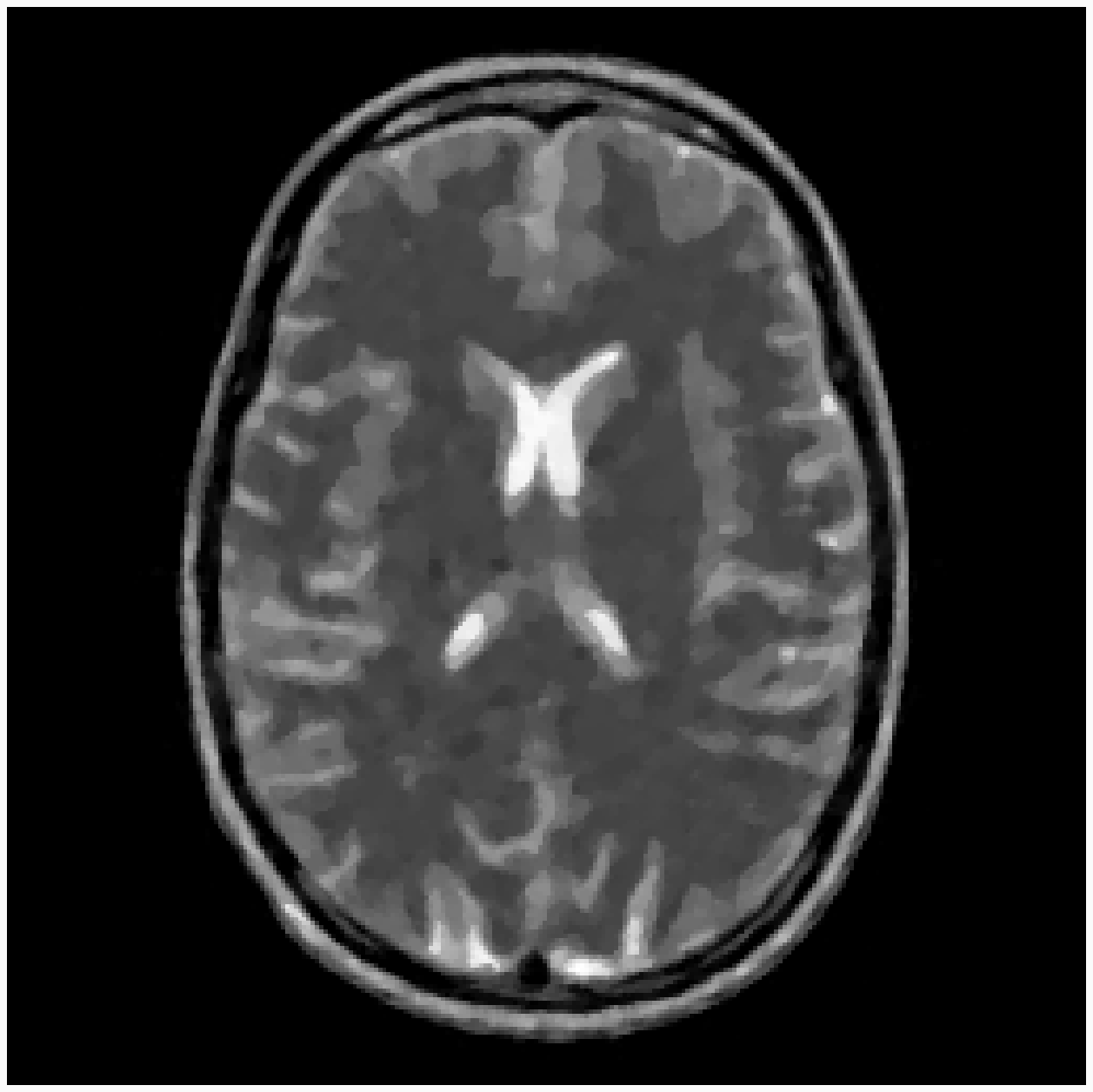}
        \caption{Ref}\label{fig:reconstructionBrainRef}
    \end{subfigure}
    \hskip2em
    \begin{subfigure}{.45\linewidth}
        \includegraphics[scale=0.33]{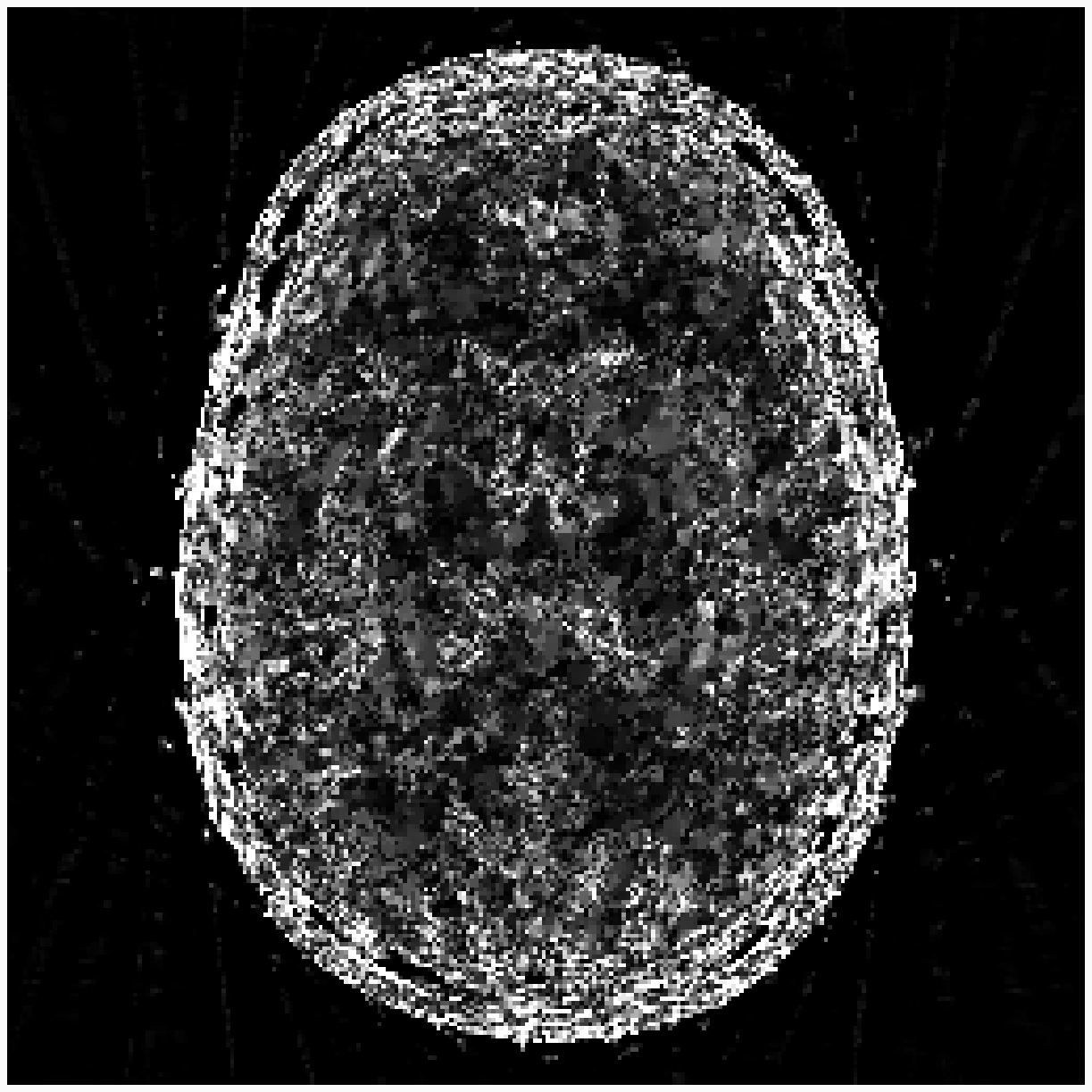}
        \caption{(LTRI - Ref)$\times 10^2$}
    \end{subfigure}
	\vskip2em
    \begin{subfigure}{.45\linewidth}
        \includegraphics[scale=0.33]{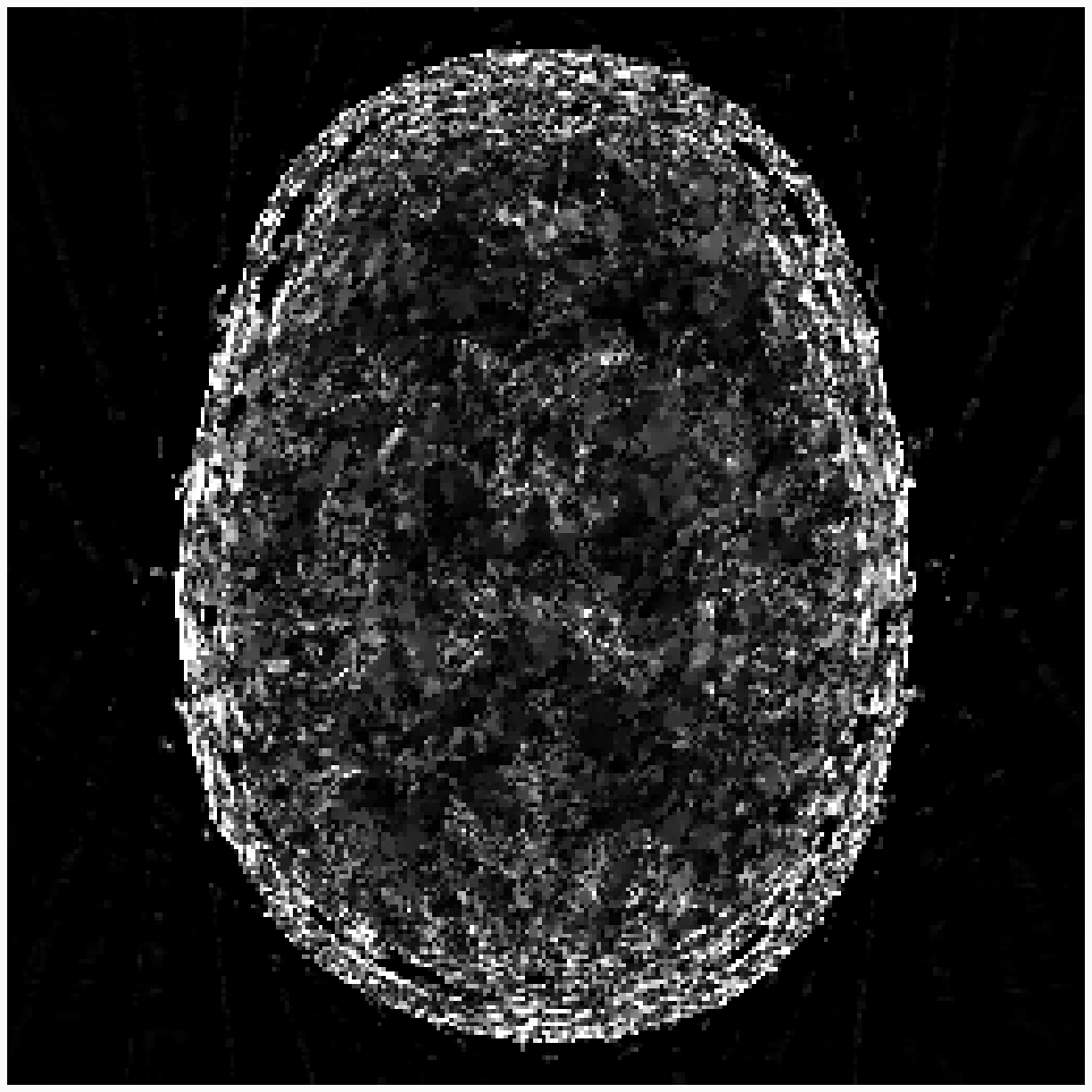}
        \caption{(SF - Ref)$\times 10^2$}
    \end{subfigure}
	\hskip2em
	\begin{subfigure}{.45\linewidth}
     	\includegraphics[scale=0.33]{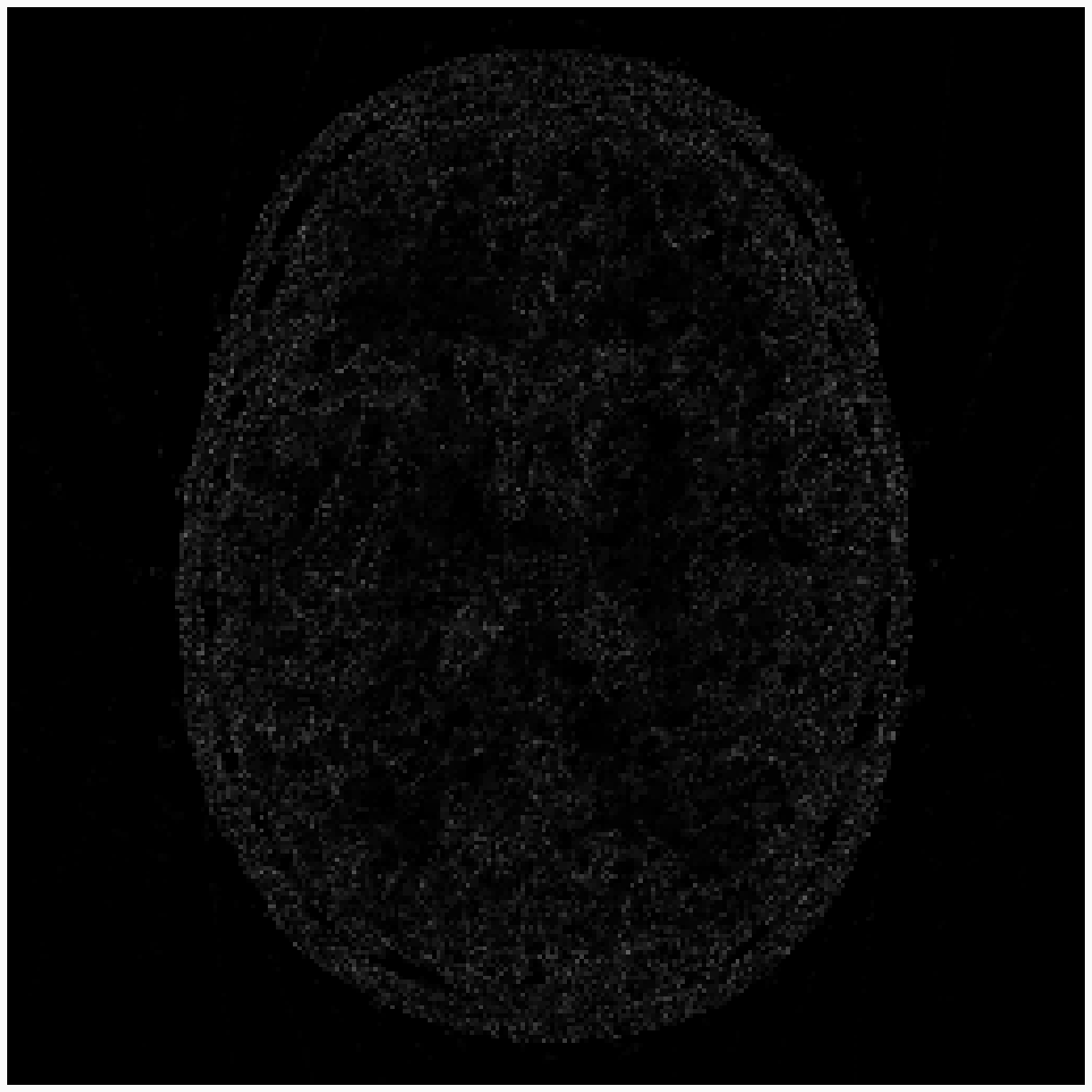}
        \caption{(CNSF - Ref)$\times (4\times10^3)$}
    \end{subfigure}
    \caption{Reconstruction of brain phantom from $30$ uniformly spaced projections using ASD-POCS. }\label{fig:reconstructionBrain}
\end{figure}
This experiment shows significant improvements over LTRI and SF methods in image reconstruction.

\section{Conclusion}
Accurate and efficient modeling the CT system is essential to the iterative image reconstruction problem.
We presented a convolutional non-separable footprint framework for forward and back-projection in fan-beam X-ray tomographic reconstruction. We show the detailed derivation from parallel X-ray transform to fan-beam setting.
The experiments, in a 2-D setting, show significant improvements in the approximation error of our method compared to other state-of-the-art methods designed for this purpose. The increase of the accuracy in forward model also results in the improvement of quality in image reconstruction. In addition, several times of speedup over the GPU and CPU implementations of other methods also shows the efficiency of our method. We believe that the implementation of evaluation of the CNSF is not fully optimized. Our future research will focus on this
optimization and meanwhile we are developing an extension to 3-D for cone-beam geometry.




\bibliographystyle{IEEEtran}
\bibliography{CTBoxspline}
\end{document}